\newcommand{\spara}[1]{\smallskip\noindent{\bf #1}}
\newcommand{\para}[1]{\noindent{\bf #1}}
\newcommand{\epara}[1]{\noindent\emph{#1}}
\newcommand{\separa}[1]{\smallskip\noindent\emph{#1}}
\newtheorem{definition}{Definition}
\newtheorem{proposition}{Proposition}
\newtheorem{problem}{Problem}
\renewcommand{\vec}[1]{\ensuremath{\mathbf{#1}}\xspace}
\newcommand{\mat}[1]{\ensuremath{\mathbf{#1}}\xspace}
\newcommand{\eye}{\mat{I}}
\newcommand{\bigO}{\ensuremath{\mathcal{O}}\xspace}
\newcommand{\Din}{\ensuremath{\mat{D}_{\mathrm{in}}}\xspace}
\newcommand{\Dout}{\ensuremath{\mat{D}_{\mathrm{out}}}\xspace}
\newcommand{\convexset}{\ensuremath{\mathcal{C}}\xspace}
\newcommand{\feasibleset}{\ensuremath{\mathcal{A}}\xspace}
\algnewcommand\algorithmicforeach{\textbf{for each}}
\renewcommand{\algorithmicrequire}{\textbf{Input:}}
\renewcommand{\algorithmicensure}{\textbf{Output:}}
\newcommand{\rhocounterfactual}{\ensuremath{\rho_{\mathrm{eq}}}\xspace}
\newcommand{\rhooverall}{\ensuremath{\rho_{0}}\xspace}
\newcommand{\ourmethod}{LcGD\xspace}
\newcommand{\squishlist}{
 \begin{list}{$\bullet$}
  {  \setlength{\itemsep}{0pt}
     \setlength{\parsep}{3pt}
     \setlength{\topsep}{3pt}
     \setlength{\partopsep}{0pt}
     \setlength{\leftmargin}{2em}
     \setlength{\labelwidth}{1.5em}
     \setlength{\labelsep}{0.5em}
} }
\newcommand{\squishlisttight}{
 \begin{list}{$\bullet$}
  { \setlength{\itemsep}{0pt}
    \setlength{\parsep}{0pt}
    \setlength{\topsep}{0pt}
    \setlength{\partopsep}{0pt}
    \setlength{\leftmargin}{2em}
    \setlength{\labelwidth}{1.5em}
    \setlength{\labelsep}{0.5em}
} }
\newcommand{\squishdesc}{
 \begin{list}{}
  {  \setlength{\itemsep}{0pt}
     \setlength{\parsep}{3pt}
     \setlength{\topsep}{3pt}
     \setlength{\partopsep}{0pt}
     \setlength{\leftmargin}{1em}
     \setlength{\labelwidth}{1.5em}
     \setlength{\labelsep}{0.5em}
} }
\newcommand{\squishend}{
  \end{list}
}
\title[Rebalancing Social Feed to Minimize Polarization and Disagreement]{Rebalancing Social Feed \\to Minimize Polarization and Disagreement}
\author{Federico Cinus}
\affiliation{%
  \institution{Sapienza University, Rome, Italy}
  \institution{CENTAI, Turin, Italy}
  \city{}
  \state{}
  \postcode{}
  \country{}
}
\email{federico.cinus@centai.eu}
\author{Aristides Gionis}
\affiliation{%
  \institution{KTH Royal Institute of Technology, Stockholm, Sweden}
  \city{}
  \state{}
  \postcode{}
  \country{}
}
\email{argioni@kth.se}
\author{Francesco Bonchi}
\affiliation{%
  \institution{CENTAI, Turin, Italy}
  \institution{Eurecat, Barcelona, Spain}
  \city{}
  \state{}
  \postcode{}
  \country{}
}
\email{bonchi@centai.eu}
\begin{document}
\begin{abstract}
Social media have great potential for enabling public discourse on important societal issues.
However, adverse effects, such as polarization and echo chambers,
greatly impact the benefits of social media and
call for algorithms that mitigate these effects.
In this paper, we propose a novel problem formulation aimed at slightly nudging users' social feeds in order to strike a balance between relevance and diversity, thus mitigating the emergence of polarization, without lowering the quality of the feed. Our approach is based on re-weighting
the relative importance of the accounts that a user follows, so as to calibrate the frequency with which the content produced by various accounts is shown to the user.

We analyze the convexity properties of the problem, demonstrating the non-matrix convexity of the objective function and the convexity of the feasible set. To efficiently address the problem, we develop a scalable algorithm based on projected gradient descent. We also prove that our problem statement is a proper generalization of the undirected-case problem so that our method can also be adopted for undirected social networks. As a baseline for comparison in the undirected case, we develop a semidefinite programming approach, which provides the optimal solution.
Through extensive experiments on synthetic and real-world datasets, we validate the effectiveness of our approach, which outperforms non-trivial baselines, underscoring its ability to foster healthier and more cohesive online communities.
\end{abstract}

 \begin{CCSXML}
<ccs2012>
<concept>
<concept_id>10002951.10003260.10003261.10003270</concept_id>
<concept_desc>Information systems~Social recommendation</concept_desc>
<concept_significance>500</concept_significance>
</concept>
<concept>
<concept_id>10003033.10003106.10003114.10003118</concept_id>
<concept_desc>Networks~Social media networks</concept_desc>
<concept_significance>500</concept_significance>
</concept>
<concept>
<concept_id>10003033.10003106.10003114.10011730</concept_id>
<concept_desc>Networks~Online social networks</concept_desc>
<concept_significance>500</concept_significance>
</concept>
</ccs2012>
\end{CCSXML}

\ccsdesc[500]{Information systems~Social recommendation}
\ccsdesc[500]{Networks~Social media networks}
\ccsdesc[500]{Networks~Online social networks}

\keywords{opinion dynamics; polarization; social feed; gradient descent}

\maketitle \sloppy

\section{Introduction}
\label{sec:intro}
The phenomenon of polarization in social-media platforms indicates the emergence of two (or more) distinct poles of opinions, often coupled with structural network segregation~\cite{tucker2018social}. In fact, users of these platforms, tend to \emph{``follow''} other users whose interests and opinions align with their own, both because of natural homophily and because of \emph{who-to-follow} recommender systems.
Moreover, %
the algorithmically-curated \emph{``feed''} exposed to each user
ends up being constituted largely by content aligned with their own opinion.
As a result, algorithms employed by social-media platforms may lead to the well-known \emph{``echo-chamber''} effect~\cite{quattrociocchi2016echo,cinus2022effect},
where individuals of similar mindsets reinforce their beliefs while remaining shielded from dissenting viewpoints, thus becoming more polarized~\cite{nikolov2015measuring,pariser2011filter}.

Addressing the phenomenon of polarization and its negative consequences is crucial for fostering healthy online discourse and mitigating the spread of misinformation and extremism. Various strategies have been proposed to mitigate polarization in social networks \cite{hartman2022interventions}, including \emph{who-to-follow} recommendations aimed at minimizing controversy \cite{GarimellaMGM18},
or allocating seed users to optimize various diversity measures~\cite{AslayMGG18,garimella2017balancing,TuAG20}.
However, directly targeting and minimizing polarization can be problematic, as it may infringe upon users' freedom of expression and foster opposite effects \cite{bail2018exposure}. Moreover, recommending content or users to follow with a different mindset, might reduce the relevance of the content for the users, potentially eroding their experience on the platform.

In this paper, we study the problem of \emph{re\-balancing} the feed of users in a social network
with the aim of minimizing the \emph{polarization} and \emph{disagreement} in the network.
Our objective is inspired by the work of \citet{musco2018minimizing},
but we introduce a novel extension that poses significant challenges,
while being more realistic. In more detail,  we consider the setting of a \emph{directed} social network, where a link $(i,j)$ represents a \emph{follower-followee} relation, and thus user~$j$ can influence user~$i$.
We assume that opinions in the network are formed according to the
Friedkin-Johnsen opinion-dynamics model \cite{friedkin1990social}.
Our goal is to re-weight the relative importance of the accounts that a user follows,
so as to calibrate the frequency with which the content produced by various accounts is shown to the user. Our approach, by re-balancing the weight of existing links, preserves the total engagement of each user in the social network.

We formally characterize our problem, showing that the objective function is
not matrix-convex, but the feasible set is convex.
We propose a fast and scalable algorithm based on projected gradient descent,
which overcomes the challenges of non-matrix convexity and scalability.
We derive a closed-form expression for the gradient with respect to the adjacency matrix, and we exploit its rank-one properties to achieve an efficient evaluation through the Biconjugate Gradient Solver. Furthermore, we outline an efficient projection step and show that our method has approximately linear time complexity in the number of edges of the network.

We also prove that our problem is a proper generalization of the undirected case so that our method can be adopted for undirected graphs.
As a baseline for comparison in the undirected case, we also develop a semi\-definite programming approach.

We conduct extensive experimentation using both real-world and synthetic datasets, featuring diverse opinion distributions, varying polarization levels, and distinct modularity in the network structure. The results we obtain highlight the superior performance of our method compared to non-trivial baselines, providing empirical evidence of their efficacy in minimizing polarization and disagreement in large real-world social networks.

\smallskip 
\noindent The key contributions of this paper are as follows:

\squishlist
\item We propose a novel problem formulation for re-weighting the links of a directed social network with the goal to minimize polarization and disagreement, under preserving the total engagement for each user (Section \ref{sec:problem}).

\item We rigorously analyze the convexity properties of the problem. Specifically, we prove that the objective function is not matrix-convex while demonstrating the convexity of the feasible set (Section \ref{sec:problem}). We also show that our problem is a proper generalization of the undirected case.

\item For the proposed problem, we develop a projected gradient-descent algorithm that offers scalability and practicality (Section~\ref{subsec:method}). In particular, our method has near-linear time complexity in the number of edges per iteration (Section~\ref{subsec:runtime}).

\item As a baseline for studying the performance of our method to undirected social networks, we devise a semi\-definite programming approach that allows us to obtain an optimal solution for the undirected case (Section \ref{sec:undirected}),

\item We validate the effectiveness of our approach through a series of extensive experiments conducted on both synthetic and real-world large-scale datasets. We evaluate the performance across a wide range of input instances, providing a comprehensive characterization of its behavior (Section \ref{sec:experiments}).
\squishend

\section{Related Work}
\label{sec:related}
The phenomena of bias, polarization, and echo chambers have been studied extensively in social-media research~\cite{vicario2019polarization, bessi2016users, kubin2021role}. 
A main challenge has been to develop mitigation strategies
\cite{matakos2020tell, AslayMGG18, interian2021polarization,TuAG20, garimella2017balancing,baumann2023optimal}. 
However, many of the existing methods focus on a static setting, 
without considering the dynamic nature of the underlying opinion-formation process.

The \emph{Friedkin--Johnsen} (FJ) \emph{opinion-dynamics model} \cite{friedkin1990social} 
is the most widely-used framework for studying how opinions form among users in a social network  
\cite{acemoglu2011opinion, bindel2015bad, xia2011opinion, zha2020opinion}. 
In the FJ model, individuals are represented by nodes in a network
and social ties are represented by edges. 
Individuals update their opinions through an iterative weighted-averaging process 
that considers their own opinion and the opinions of their neighbors. 
The model incorporates two key variables: \emph{susceptibility} and \emph{strength of social ties}. 
While some previous research has studied the effects of susceptibility \cite{abebe2021opinion, marumo2021projected,xu2022effects}, 
in this paper, we assume that individual susceptibility is 
inherent to an individual's personality and, thus, it remains constant. 
Instead, we focus on \emph{structural interventions} as a means of shaping opinion dynamics. 
Structural interventions involve adjusting the weights of social ties, 
which can be interpreted as ranking scores, or frequencies of posts in users' social feeds.

The problem of jointly minimizing polarization and disagreement in the FJ model 
has been studied for undirected graphs in both discrete and continuous cases \cite{zhu2021minimizing,musco2018minimizing}. 
\citet{zhu2021minimizing} propose the problem 
of identifying $k$ new links in an undirected graph to minimize polarization and disagreement. 
They devise a greedy method to optimize this non-submodular objective function, 
which provides a constant-factor approximation and runs in cubic time. 
However, the practical applicability of their approach may be restricted due to its implicit assumptions on the number of link additions. 

In the continuous setting, \citet{chen2018quantifying} propose a novel framework for quantifying and optimizing conflict measures in social networks, including polarization, and disagreement. They investigate optimization potential across all sets of prior opinions in average- and worst-case scenarios, addressing challenges in determining globally optimal adjacency matrices using gradient descent. However, their approach focuses on small-graph modifications (deletions, insertions) and does not scale to large networks.
Also in the continuous setting, \citet{musco2018minimizing} investigate two problems, 
both of which differ from the one we study in this paper: 
the first problem involves optimizing graph topologies, 
while the second problem aims to optimize initial user opinions.

In the first problem studied by \citet{musco2018minimizing}, which is more closely related to our work, 
the objective is to find a weighted undirected %
graph that optimizes the objective function, given a total edge-weight budget. 
However, a key distinction from our formulation is that \emph{they do not consider a specific input network}. Instead, they explore all possible undirected graph topologies. 
Consequently, their formulation does not directly translate into actionable solutions such as link recommendation or social-feed curation.

The second key difference with the work of \citet{musco2018minimizing} lies in their focus on undirected graphs. 
In contrast, our formulation extends to directed (\emph{follower-followee}) social networks, 
which are more representative of real-world scenarios. 
The directed formulation is both \emph{more general}
and \emph{more meaningful} as, in general, 
the influence between two users is not symmetric. 
Furthermore, the non-symmetry of directed graphs poses significant new challenges, 
stemming from the non-convexity of the objective.

In Section \ref{subsec:undirected_stmt}, we define an undirected version of our problem and demonstrate that the directed version is a proper generalization. This allows our method to be applicable to undirected social networks. 
Our undirected formulation differs from the problem of \citet{musco2018minimizing} as it considers a given network structure and preserves per-user engagement, 
instead of distributing a total weight on a discovered graph topology. 

In our experiments (Section \ref{sec:experiments}), we compare our method against the SDP (optimal) formulation for undirected instances, which we define in Section \ref{sec:undirected}. The SDP formulation can be viewed as a version of the method proposed by \citet{musco2018minimizing}, adapted to accommodate a given graph topology. Our results demonstrate that our method outperforms the SDP formulation, as it provides the flexibility to assign different weights to $(i,j)$ and $(j,i)$, while the SDP only produces a symmetrical adjacency matrix.

\section{Problem definition}
\label{sec:problem}
We are given a directed, weighted graph $G=(V,E)$, with $|V|=n$ nodes and $|E|=m$ edges, where each node $i\in V$ corresponds to a user, and
each directed edge $(i,j)\in E$ indicates
that $i$ \emph{``follows''} $j$ or, in other terms, that  $j$ can influence the opinion of $i$. The edge weight $a_{ij}$ quantifies the amount of
influence that user $j$ exerts on user $i$.
We assume that $a_{ij} > 0$ if $(i,j)\in E$ and $a_{ij} = 0$ if $(i,j)\not\in E$
and we represent all the weights as a matrix $\mat{A}$, i.e., $\mat{A}[i,j] = a_{ij}$.

We adopt the popular \emph{Friedkin--Johnsen} (FJ) \emph{opinion-dynamics model}~\cite{friedkin1990social}. In the FJ model, each individual has an \emph{innate opinion}, which
may differ from their \emph{expressed opinion} on social media,
due to various factors, such as social pressure or fear of judgment.
For each individual $i\in V$,
their innate opinion is denoted by $s_i$ and their expressed opinion by $z_i$.
The sets of innate and expressed opinions, for all individuals in the network,
are represented by vectors $\vec{s}\in\mathbb{R}^{{n}}$ and $\vec{z}\in\mathbb{R}^{{n}}$, respectively. Individuals update their expressed opinions,
based on the opinions of their neighbors and their own innate opinion.
Specifically, for each individual $i$, their expressed opinion $z_i$ at time $t+1$
is given by the average of the opinions of their neighbors at time $t$
and their own innate opinion, weighted by the strength of their influence.
If we denote by \Dout the diagonal matrix whose $i$-th diagonal entry
contains the weighted out-degree of individual~$i$, i.e.,
$\Dout[i,i] = \sum_{j\in V} \mat{A}[i,j]$,
and by $\vec{z}^{(t)}$ the vector of expressed opinions at time $t$,
the opinion-update rule can be written in matrix notation as
\begin{equation}
\vec{z}^{(t+1)} = (\Dout+\eye)^{-1}(\mat{A}\vec{z}^{(t)} + \vec{s}).
\label{eq:update_matrix}
\end{equation}

By iterating Equation~\eqref{eq:update_matrix} and using the convergence theorems for matrices 
\cite[Theorem 7.17 and Lemma 7.18]{burden2015numerical}
we can find the equilibrium of the system,
where the opinions of all individuals have converged to a steady state.
This equilibrium is given by
\begin{equation}
\vec{z}^* = (\eye + \mat{L})^{-1}\vec{s},
\label{eq:equilibrium}
\end{equation}
where $\mat{L}=\Dout-\mat{A}$ is the Laplacian matrix of the graph $G$.
Equation~\eqref{eq:equilibrium} shows that the equilibrium opinions depend only on the innate opinions
and the structure of the social network, as captured by the Laplacian matrix.

Next, we define the measures of \emph{polarization} and \emph{disagreement} in a social network
for a given set of opinions \vec{z}.

\begin{definition}[Polarization]
The \emph{polarization index} of a set of opinions \vec{z}, denoted by $P(\vec{z})$,
measures the extent to which the opinions \vec{z} deviate from the average opinion
$\mu_\vec{z} = \frac{1}{{n}} \sum_{i=1}^{{n}} z_i$, that is,
\begin{equation}
P(\vec{z}) = \sum_i (z_i - \mu_\vec{z})^2.
\label{eq:polarization}
\end{equation}
\end{definition}

\begin{definition}[Disagreement]
The \emph{disagreement index} of a set of opinions $\vec{z}$ on a directed graph $G=(V,E)$,
denoted by $D(\vec{z})$,
measures the extent to which the opinions \vec{z} differ in the edges of~$G$.
It is defined as the average disagreement over all directed edges,
weighted by the amount of influence for each edge, that is,
\begin{equation}
D(\vec{z}, \mat{A}) = \frac{1}{2} \sum_{(i,j)\in E} a_{ij}(z_i-z_j)^2.
\label{eq:disagreement}
\end{equation}
\end{definition}

Given a social network $G$ and influence weights \mat{A} along its edges,
our goal in this paper is to slightly adjust the edge weights \mat{A} so as to minimize
the level of polarization and disagreement in the network.
Such edge-weight readjustment can help promote more balanced and constructive discussions among individuals.
As we discussed before, earlier works have studied similar tasks
but have focused on undirected graphs~\cite{chen2018quantifying, musco2018minimizing}.
In this paper, we focus on directed graphs
and methods proposed in previous work are not applicable.
Additionally, we require to modify the adjacency matrix
so that no new edges are introduced and the total out-degree of each node is preserved.
To quantify this requirement,
we denote by $\convexset(\mat{A})$ the set of matrices \mat{X}
for which
$\mat{X}\vec{1} = \mat{A}\vec{1}$ and
$\mat{A}[i,j]=0$ implies $\mat{X}[i,j]=0$.
Here, we write $\vec{1}$ to denote the vector of all ones.
We formally define $\convexset(\mat{A})$ in the next~section.

At a high level, we consider the following optimization problem,
which will be further refined in the next section.

\begin{problem}
\label{problem:rebalance}
Given a directed graph $G=(V, E)$ with weighted adjacency matrix \mat{A},
find a set of new edge weights $\mat{A}^*$
that minimizes the sum of polarization and disagreement at equilibrium, that is,
\[
\mat{A}^* = \arg\min_{\mat{X}\in\convexset(\mat{A})} P(\vec{z}^*) + D(\vec{z}^*, \mat{X}),
\]
where $\vec{z}^*$ is the equilibrium opinion vector obtained by the FJ model with edge weights $\mat{A}^*$,
and $\convexset(\mat{A})$ is defined above.
\end{problem}

\subsection{Characterization}
In this section, we aim to further characterize the problem we study,
focusing on its properties that stem by considering directed graphs.

\spara{Assumptions.}
We start by discussing some basic assumptions for our problem,
which we make mostly for ease of exposition.
First, we assume that the distribution of opinions $\vec{s} \in \mathbb{R}^{{n}}$
is \emph{mean centered}.
This assumption aligns with the literature \cite{musco2018minimizing,chen2018quantifying},
and incurs no loss of generality, as opinions can always be translated to have zero mean.
Furthermore, we assume that the  adjacency matrix \mat{A} of the input directed graph $G=(V, E)$
is \emph{row stochastic}, i.e., $\mat{A}\vec{1}=\vec{1}$.
This assumption allows for a straightforward interpretation of the total amount of influence
that a node receives to sum to $1$.
Based on these assumptions, the FJ model equilibrium in Equation~\eqref{eq:equilibrium} becomes
\begin{equation}
\vec{z}^* = (2\eye  - \mat{A})^{-1}\vec{s},
\label{eq:equilibrium2}
\end{equation}
where we have used the fact that for a \emph{row-stochastic matrix} \mat{A}
its Laplacian can be written as $\mat{L}=\eye-\mat{A}$.

\begin{proposition}
For zero-mean opinions
the \textit{polarization} and \textit{disagreement} indices
can be written using the following quadratic forms
\[
    P(\vec{z})=\vec{z}^{\top}\vec{z},
\quad\text{and}\quad
    D(\vec{z}, \mat{A})= \frac{1}{2}\vec{z}^{\top} (\eye + \Din - 2\mat{A})\vec{z},
\]
where \Din is the diagonal matrix containing the in-degrees of the nodes in $G$.
\end{proposition}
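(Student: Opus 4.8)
The plan is to establish the two identities separately by direct expansion; neither requires more than elementary manipulation, but the directed setting makes the bookkeeping slightly delicate. For the polarization index I would start from the defining sum $P(\vec{z}) = \sum_i (z_i - \mu_\vec{z})^2$ with $\mu_\vec{z} = \tfrac1n\sum_i z_i$; under the zero-mean hypothesis $\mu_\vec{z}=0$, so the sum collapses to $\sum_i z_i^2 = \vec{z}^\top\vec{z}$, which is the claimed quadratic form.

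For the disagreement index I would expand $(z_i-z_j)^2 = z_i^2 - 2 z_i z_j + z_j^2$ inside Equation~\eqref{eq:disagreement} and split the weighted edge sum into three parts. Because $a_{ij}=0$ whenever $(i,j)\notin E$, each part can be rewritten as a sum over all ordered pairs $(i,j)$. The cross term is $\sum_{i,j} a_{ij} z_i z_j = \vec{z}^\top\mat{A}\vec{z}$. The $z_i^2$ term factors as $\sum_i z_i^2 \bigl(\sum_j a_{ij}\bigr) = \sum_i z_i^2 = \vec{z}^\top\vec{z}$, using row-stochasticity of \mat{A}, i.e.\ $\sum_j a_{ij}=1$. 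The $z_j^2$ term factors as $\sum_j z_j^2 \bigl(\sum_i a_{ij}\bigr) = \vec{z}^\top\Din\vec{z}$, by the definition of the in-degree matrix \Din. Collecting the three pieces, with the leading factor $\tfrac12$, yields $D(\vec{z},\mat{A}) = \tfrac12\vec{z}^\top(\eye + \Din - 2\mat{A})\vec{z}$.

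The one point requiring care — and the only real difference from the undirected derivation in \cite{musco2018minimizing} — is the asymmetry between the two square terms: contracting $z_i^2$ against the \emph{out}-degrees collapses to \eye thanks to row-stochasticity, whereas contracting $z_j^2$ against the \emph{in}-degrees leaves the generally non-uniform matrix \Din in place, so the resulting quadratic form is not symmetric in general. In the undirected case in- and out-degrees coincide and one recovers the familiar form $\tfrac12\vec{z}^\top(\eye+\mat{D}-2\mat{A})\vec{z}$. It is also worth noting that the zero-mean assumption is used only for the polarization identity, while the disagreement identity rests instead on row-stochasticity; beyond tracking these two ingredients correctly, I anticipate no genuine obstacle.
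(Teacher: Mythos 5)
Your proposal is correct and follows essentially the same route as the paper: the polarization identity is immediate from the zero-mean hypothesis, and the disagreement identity is obtained by expanding $(z_i-z_j)^2$ and contracting the $z_i^2$ term against the row sums (which collapse to $\eye$ by row-stochasticity) and the $z_j^2$ term against the column sums (which give $\Din$). Your closing remark correctly isolates which hypothesis each identity uses, which the paper leaves implicit, but the argument itself is the same.
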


\begin{proof}
Using proposition 3.2 in the work of~\citet{musco2018minimizing}, since $\vec{s}$ is mean-centered, the average expressed opinion is zero, and the polarization measure is simply the sum of the squared opinions.
On the other hand, simply noticing that Equation~\eqref{eq:disagreement} is the sum of average edge disagreements (i.e., the squared distance between expressed opinions), the quadratic form can be derived as follows
\begin{align*}
2D(\vec{z}, \mat{A}) %
    & = \sum_{i,j}a_{ij} z_i^2 + \sum_{i,j} a_{ij}z_j^2 - 2\sum_{ij} a_{ij}z_i z_j \\
    & = \sum_{i,j} (1+d_j) z_iz_j \delta_{ij} - 2\sum_{ij} a_{i,j}z_i z_j \\
    & = \vec{z}^{\top} (\eye + \Din - 2\mat{A})\vec{z},
\end{align*}
where $d_j$ is in-degree of node $j$, and $\delta_{ij}$ is the Kronecker delta.
\end{proof}

\spara{Edge constraints.}
In contrast to previous studies \cite{musco2018minimizing,chen2018quantifying},
we restrict the feasible set of solutions to adjacency matrices where
\emph{the set of edges is a subset of the edges in the input graph} and each
\emph{individual out-degree is preserved}.
By doing so, we aim to mitigate polarization and disagreement by
\emph{using only pre-existing links} and
\emph{preserving the total engagement of each user} in the social network.
Formally, given the adjacency matrix \mat{A} for a graph $G$,
we define the convex set of feasible solutions as follows.
\begin{equation*}
\label{eq:feasible-set}
\convexset(\mat{A}) =
\{ \mat{X} \in \mathbb{R}^{{n}\times {n}} \mid
\mat{X}\vec{1}=\mat{A}\vec{1} \text{ and } \mat{A}[i,j]=0 \text{ implies } \mat{X}[i,j]=0 \}.
\end{equation*}
\begin{proposition}
\label{proposition:feasible-set-convexity}
The set $\convexset(\mat{A})$ is convex.
\end{proposition}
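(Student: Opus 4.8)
The plan is to recognize that $\convexset(\mat{A})$ is the solution set of a finite system of \emph{linear equality} constraints on the entries of $\mat{X}$, hence an affine subset of $\mathbb{R}^{{n}\times{n}}$, and therefore convex. First I would rewrite the two defining conditions uniformly as linear equations: the out-degree--preservation condition $\mat{X}\vec{1}=\mat{A}\vec{1}$ amounts to the system $\sum_{j}\mat{X}[i,j]=(\mat{A}\vec{1})_i$ for $i\in V$, each equation being linear in the entries of $\mat{X}$ with a constant right-hand side fixed by the input; and the support condition ``$\mat{A}[i,j]=0$ implies $\mat{X}[i,j]=0$'' is the system $\mat{X}[i,j]=0$ ranging over the fixed, finitely many pairs $(i,j)\notin E$.

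Then I would carry out the one-line direct verification. Take $\mat{X}_1,\mat{X}_2\in\convexset(\mat{A})$ and $\lambda\in[0,1]$, and set $\mat{X}_\lambda=\lambda\mat{X}_1+(1-\lambda)\mat{X}_2$. For the degree constraint, $\mat{X}_\lambda\vec{1}=\lambda\mat{X}_1\vec{1}+(1-\lambda)\mat{X}_2\vec{1}=\lambda\mat{A}\vec{1}+(1-\lambda)\mat{A}\vec{1}=\mat{A}\vec{1}$. For the support constraint, if $\mat{A}[i,j]=0$ then $\mat{X}_1[i,j]=\mat{X}_2[i,j]=0$, hence $\mat{X}_\lambda[i,j]=\lambda\cdot 0+(1-\lambda)\cdot 0=0$. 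Thus $\mat{X}_\lambda\in\convexset(\mat{A})$, establishing convexity; I would also note that $\convexset(\mat{A})$ is nonempty since $\mat{A}\in\convexset(\mat{A})$.

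There is essentially no obstacle: the argument is routine. The only point worth stressing — and which is used implicitly by the later projected-gradient-descent analysis — is that both constraints are equalities with constant data, so $\convexset(\mat{A})$ is in fact a translate of the linear subspace of matrices supported on $E$ with zero row sums, and is therefore closed as well as convex. If a more abstract phrasing is preferred, one writes $\convexset(\mat{A})=\bigcap_{i\in V}H_i\cap\bigcap_{(i,j)\notin E}K_{ij}$ with $H_i=\{\mat{X}:\sum_j\mat{X}[i,j]=(\mat{A}\vec{1})_i\}$ and $K_{ij}=\{\mat{X}:\mat{X}[i,j]=0\}$, each a hyperplane in $\mathbb{R}^{{n}\times{n}}$, and invokes the fact that a finite intersection of convex sets is convex.
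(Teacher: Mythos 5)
Your proposal is correct and its core argument---directly verifying that a convex combination $\lambda\mat{X}_1+(1-\lambda)\mat{X}_2$ preserves both the row-sum equality and the support condition---is exactly the paper's proof. The extra observations you add (that $\convexset(\mat{A})$ is an affine, hence closed, set, and that it is nonempty) are valid but not needed for the statement.
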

\begin{proof}
Let $\mat{X}_1,\mat{X}_2 \in \convexset(\mat{A})$ and $\lambda \in [0,1]$.
We need to show that the convex combination $\lambda \mat{X}_1 + (1-\lambda)\mat{X}_2 \in \convexset(\mat{A})$.
First, note that
$(\lambda \mat{X}_1 + (1-\lambda)\mat{X}_2)\mathbf{1} = \lambda \mat{X}_1\mathbf{1} + (1-\lambda)\mat{X}_2\mathbf{1} = \mathbf{1}$,
so the matrix is row-stochastic.
Second,  since $\mat{X}_1,\mat{X}_2 \in \convexset(\mat{A})$,
$\mat{A}[i,j]=0$ for some $(i,j)$ implies $\mat{X}_1[i,j]=\mat{X}_2[i,j]=0$,
and thus $(\lambda \mat{X}_1 + (1-\lambda)\mat{X}_2)[i,j]=0$.
Therefore, $\convexset(\mat{A})$ is convex.
\end{proof}

\para{Objective function.}
We define our objective as the sum of polarization and disagreement indices at the equilibrium
of the FJ opinion-dynamics model,
with the adjacency matrix \mat{A} being the variable of the optimization problem.
The following proposition describes our objective function.

\begin{proposition}
The sum of the polarization and disagreement indices at the equilibrium of the FJ model,
with a directed row-stochastic adjacency matrix \mat{A} and innate opinions $\vec{s}$,
is given by
\begin{equation}
f(\mat{A}, \vec{s}) = \vec{s}^\top (2\eye-\mat{A})^{-\top} \vec{s}+ \vec{s}^\top (2 \eye-\mat{A})^{-\top}\frac{(\Din-\eye)}{2}(2 \eye-\mat{A})^{-1}\vec{s}.
\label{eq:objective}
\end{equation}
\end{proposition}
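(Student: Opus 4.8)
The plan is to plug the closed-form equilibrium from Equation~\eqref{eq:equilibrium2} into the two quadratic forms supplied by the previous proposition and simplify. Throughout I would abbreviate $\mat{M} = (2\eye-\mat{A})^{-1}$, so that $\vec{z}^* = \mat{M}\vec{s}$ and $(\vec{z}^*)^\top = \vec{s}^\top\mat{M}^\top$; since $G$ is directed, $\mat{A}$ and hence $\mat{M}$ are in general non-symmetric, so I would keep all transposes explicit and resist the temptation to write $\mat{M}^\top = \mat{M}$.

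First I would use the earlier proposition to substitute $P(\vec{z}^*) = (\vec{z}^*)^\top\vec{z}^*$ and $D(\vec{z}^*,\mat{A}) = \tfrac12(\vec{z}^*)^\top(\eye + \Din - 2\mat{A})\vec{z}^*$, so that
\[
f(\mat{A},\vec{s}) = (\vec{z}^*)^\top\Bigl(\eye + \tfrac12(\eye + \Din - 2\mat{A})\Bigr)\vec{z}^* = (\vec{z}^*)^\top\Bigl(\tfrac32\eye + \tfrac12\Din - \mat{A}\Bigr)\vec{z}^*.
\]
The one non-mechanical step is to recognise the decomposition
\[
\tfrac32\eye + \tfrac12\Din - \mat{A} = (2\eye-\mat{A}) + \tfrac12(\Din-\eye),
\]
which separates the part of the middle matrix that will annihilate an inverse from the part that survives it.

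Next I would substitute $\vec{z}^* = \mat{M}\vec{s}$ and distribute over the two summands. For the first, $\mat{M}^\top(2\eye-\mat{A})\mat{M} = \mat{M}^\top$ because $(2\eye-\mat{A})\mat{M} = \eye$, giving the term $\vec{s}^\top(2\eye-\mat{A})^{-\top}\vec{s}$. For the second, nothing cancels and we are left with $\vec{s}^\top\mat{M}^\top\tfrac12(\Din-\eye)\mat{M}\vec{s}$; re-expanding $\mat{M}$ yields the second term of Equation~\eqref{eq:objective}. Adding the two gives the claimed expression.

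I would not expect a genuine obstacle: the argument is short bookkeeping once the decomposition is spotted. The only places demanding care are (i) keeping the transpose on $(2\eye-\mat{A})^{-1}$ rather than assuming symmetry --- this is exactly where the directed setting departs from the undirected analysis of \citet{musco2018minimizing} --- and (ii) correctly folding the identity term coming from polarization together with the $\tfrac12$ prefactor of the disagreement form before applying the split.
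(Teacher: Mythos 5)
Your proposal is correct and follows essentially the same route as the paper's proof: substitute $\vec{z}^* = (2\eye-\mat{A})^{-1}\vec{s}$ into the two quadratic forms from the preceding proposition and split the resulting middle matrix as $(2\eye-\mat{A}) + \tfrac{1}{2}(\Din-\eye)$ so that the first piece cancels one inverse. Incidentally, your intermediate matrix $\tfrac{3}{2}\eye + \tfrac{1}{2}\Din - \mat{A}$ is the correct one --- the numerator $2\eye+\Din-2\mat{A}$ in the paper's intermediate display is a typo for $3\eye+\Din-2\mat{A}$, though the paper's final line agrees with yours.
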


\begin{proof}
Starting from Definitions \ref{eq:polarization} and \ref{eq:disagreement},
and considering the equilibrium vector as $\vec{z}^*$, we can express our objective function as:
\begin{align*}
f(\mat{A},\vec{s}) = P(\vec{z}^*) + D(\vec{z}^*, \mat{A})
= (\vec{z}^*)^{\top}\vec{z}^* + \frac{1}{2} (\vec{z}^*)^{\top} (\eye + \Din - 2A) \vec{z}^*.
\end{align*}
Using the equilibrium condition in Equation~\eqref{eq:equilibrium2}, we get:
\begin{align*}
f(\mat{A},\vec{s})
&= \vec{s}^{\top}(2\eye - \mat{A})^{-\top}\left( \frac{2\eye + \Din - 2A}{2} \right)  (2\eye - \mat{A})^{-1}\vec{s} \\
&= \vec{s}^{\top}(2\eye - \mat{A})^{-\top}\vec{s} + \vec{s}^{\top}(2\eye - \mat{A})^{-\top}\frac{ (\Din - \eye)}{2} (2\eye - \mat{A})^{-1}\vec{s}.
\end{align*}
\end{proof}

\epara{Remark:}
Since $\mat{A}$ is row-stochastic, each entry satisfies $|\mat{A}[i, j]| \leq 1$,
which implies that $\eye - \mat{A}$ is diagonally dominant.
Consequently, $2\eye-\mat{A}$ is strictly diagonally dominant, and always invertible.
However, the objective function is not convex, as shown in the following proposition.

\begin{proposition}
The objective in Equation~\eqref{eq:objective} is not a matrix-convex function.
\end{proposition}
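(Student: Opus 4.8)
The plan is to refute convexity directly, by exhibiting a one-parameter segment inside the feasible set $\convexset(\mat{A})$ along which the objective has negative curvature at an interior point. Since the statement only asks that $f(\cdot,\vec{s})$ fail to be convex in the matrix variable, it suffices to produce row-stochastic matrices $\mat{A}_1,\mat{A}_2$ sharing a support, a mean-centered $\vec{s}$, and $\lambda\in(0,1)$ with $f(\lambda\mat{A}_1+(1-\lambda)\mat{A}_2,\vec{s})>\lambda f(\mat{A}_1,\vec{s})+(1-\lambda)f(\mat{A}_2,\vec{s})$. I will take $\lambda=\tfrac12$ and choose $\mat{A}_1,\mat{A}_2$ symmetric about a base point $\mat{A}_0$, so the inequality becomes the statement that $t\mapsto f(\mat{A}_0+t\mat{B},\vec{s})$ is concave at $t=0$ for a suitable direction $\mat{B}$; working with $n=3$ keeps every quantity in closed form.

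For the witness I would take the uniformly weighted complete directed graph on three nodes, $\mat{A}_0=\tfrac12(\vec{1}\vec{1}^\top-\eye)$, and the skew-symmetric circulant
\[
\mat{B}=\begin{pmatrix} 0 & 1 & -1 \\ -1 & 0 & 1 \\ 1 & -1 & 0 \end{pmatrix},\qquad \mat{A}(t)=\mat{A}_0+t\mat{B}.
\]
The properties used are: $\mat{B}\vec{1}=\mat{B}^\top\vec{1}=\vec{0}$, so $\mat{A}(t)$ stays row-stochastic and its column sums, hence $\Din$, are left unchanged; the support of $\mat{B}$ sits inside that of $\mat{A}_0$ and every entry of $\mat{A}(t)$ lies in $(0,1)$ for $|t|<\tfrac12$, so $\mat{A}(t)\in\convexset(\mat{A}_0)$ there; and $\mat{B}^2=\vec{1}\vec{1}^\top-3\eye$, which acts as $-3\eye$ on $\vec{1}^\perp$. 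From $\vec{1}^\top(2\eye-\mat{A}(t))=\vec{1}^\top$ one checks that $\vec{s}$ and the equilibrium $\vec{z}^*$ both lie in $\vec{1}^\perp$.

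The computation then proceeds as follows. Because $\Din(t)\equiv\eye$, the earlier propositions give $f(\mat{A}(t),\vec{s})=(\vec{z}^*)^\top\vec{z}^*+(\vec{z}^*)^\top(\eye-\mat{A}(t))\vec{z}^*$; the skew part of $\mat{A}(t)$ drops out of the quadratic form and $\eye-\mat{A}_0$ acts as $\tfrac32\eye$ on $\vec{1}^\perp$, so $f=\tfrac52\|\vec{z}^*\|^2$. On $\vec{1}^\perp$ the equilibrium condition reduces to $(\tfrac52\eye-t\mat{B})\vec{z}^*=\vec{s}$, whence
\[
\|\vec{z}^*\|^2=\vec{s}^\top\big[(\tfrac52\eye-t\mat{B})(\tfrac52\eye+t\mat{B})\big]^{-1}\vec{s}=\vec{s}^\top\big(\tfrac{25}{4}\eye-t^2\mat{B}^2\big)^{-1}\vec{s}=\frac{\|\vec{s}\|^2}{\tfrac{25}{4}+3t^2},
\]
so that $f(\mat{A}(t),\vec{s})=\dfrac{10\,\|\vec{s}\|^{2}}{25+12t^{2}}$. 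This function has $f''(0)=-\tfrac{240}{625}\|\vec{s}\|^{2}<0$ for any nonzero mean-centered $\vec{s}$ (e.g.\ $\vec{s}=(1,1,-2)^\top$); concretely, with $\mat{A}_{1}=\mat{A}_0+\tfrac14\mat{B}$ and $\mat{A}_{2}=\mat{A}_0-\tfrac14\mat{B}$ one gets $\tfrac12 f(\mat{A}_1,\vec{s})+\tfrac12 f(\mat{A}_2,\vec{s})=\tfrac{40}{103}\|\vec{s}\|^2<\tfrac25\|\vec{s}\|^2=f(\mat{A}_0,\vec{s})$, contradicting convexity. (Invertibility of $2\eye-\mat{A}(t)$ along the way is the strict diagonal dominance already noted in the Remark.)

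The hard part is finding a direction that is simultaneously feasible---zero row sums and support preserved---and curvature-decreasing: symmetric perturbations keep $2\eye-\mat{A}$ in the cone of positive-definite symmetric matrices, on which $\vec{s}^\top X^{-1}\vec{s}$ is convex (this is precisely the setting in which the SDP formulation of Section~\ref{sec:undirected} is exact), so a witness must exploit the asymmetry that only directed graphs permit. The circulant $\mat{B}$ does this cleanly---on $\vec{1}^\perp$ it turns $2\eye-\mat{A}(t)$ into a rotation-scaling whose squared singular value $\tfrac{25}{4}+3t^2$ is an even, strictly convex quadratic in $t$, so $\|\vec{z}^*\|^2$, being its reciprocal, is strictly concave near the origin---and it also explains why $n=2$ admits no such one-dimensional witness, since the only $2\times 2$ skew-symmetric matrix with zero row sums is $\vec{0}$; thus the non-convexity is intrinsic to the directed model with $n\ge 3$.
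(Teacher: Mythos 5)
Your proof is correct, and it takes a genuinely different route from the paper's. The paper disproves matrix convexity by brute-force counterexample: two fixed $3\times 3$ row-stochastic matrices, $\lambda=1/2$, and a numerical evaluation reporting $f$ at the midpoint ($0.87$) versus the average of the endpoint values ($0.84$). You instead build an analytic one-parameter family $\mat{A}_0+t\mat{B}$ with a skew-symmetric, zero-row-sum direction, reduce everything to $\vec{1}^\perp$ where $\mat{B}^2=-3\eye$, and obtain the closed form $f(\mat{A}(t),\vec{s})=10\|\vec{s}\|^2/(25+12t^2)$, from which non-convexity follows exactly ($40/103<2/5$) rather than from two-decimal numerics. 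I checked the key steps: $\Din(t)\equiv\eye$ so the second term of Eq.~\eqref{eq:objective} vanishes, the skew part drops from the disagreement quadratic form, $\vec{z}^*\in\vec{1}^\perp$ because the column sums are preserved, and $(2\eye-\mat{A}(t))(2\eye-\mat{A}(t))^\top$ restricted to $\vec{1}^\perp$ is $(\tfrac{25}{4}+3t^2)\eye$ --- all correct. Your version buys two things the paper's does not: (i) the witness segment lies entirely inside a single feasible set $\convexset(\mat{A}_0)$ (the paper's $\mat{A}_1,\mat{A}_2$ do not even share a support), so you show the actual constrained problem is non-convex on its domain, which is the operationally relevant statement; and (ii) the closing observation that only skew-symmetric directions can produce negative curvature, since symmetric perturbations stay in the regime where $\vec{s}^\top\mat{X}^{-1}\vec{s}$ is convex, cleanly explains why the non-convexity is intrinsic to the directed setting --- a point the paper asserts but does not illuminate. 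The paper's approach is shorter and requires no structural insight; yours is verifiable by hand and more informative.
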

\begin{proof}
To prove this proposition, we provide a counterexample by using the definition of a matrix-convex function. A matrix-valued function $f$ is said to be matrix-convex if and only if it satisfies the following inequality for all $\lambda \in [0, 1]$ and matrices $\mat{A}_1$ and $\mat{A}_2$:
\[
    f(\lambda \mat{A}_1 + (1-\lambda) \mat{A}_2) \leq \lambda f(\mat{A}_1) + (1-\lambda) f(\mat{A}_2)
\]
Consider a vector of opinions $\vec{s}^{\top} = (1 \: 0 \: -1)$.
Let $\mat{A}_1$ and $\mat{A}_2$ be two adjacency matrices of two connected graphs defined as:
\[
 \mat{A}_1 =
  \begin{bmatrix}
    0&1&0\\
    2/3& 0 &1/3 \\
    1 & 0 & 0
    \end{bmatrix},
\: \: \text{and} \: \:
\mat{A}_2 =
\begin{bmatrix}
    0&1&0\\
    1/3& 0 &2/3 \\
    0 & 1 & 0
    \end{bmatrix}.
\]
Setting $\lambda=0.5$, we can compute the two terms in the inequality to obtain
$f((\mat{A}_1 + \mat{A}_2)/2)=0.87$ and
$(f(\mat{A}_1) + f(\mat{A}_2))/2=0.84$.
Thus, the inequality is violated, and the objective function in Equation~\eqref{eq:objective} is not matrix-convex.
\end{proof}

We can now present the formal problem statement.

\begin{problem}%
Given a directed graph $G=(V, E)$ with row-stochastic adjacency matrix $\mat{A}$
and a vector of mean-centered innate opinions $\vec{s} \in \mathbb{R}^n$,
the goal is to find $\mat{A}^*$ that minimizes the following objective function,
which represents the polarization and disagreement index in the network,
\[
\min_{\mat{X} \in \convexset(\mat{A})}
s^\top  (2\eye-\mat{X})^{-\top} s+ s^\top (2 \eye-\mat{X})^{-\top}\frac{(\Din-\eye)}{2}(2 \eye-\mat{X})^{-1}s,
\]
where $\convexset(\mat{A})$ is the set of all row-stochastic sub-matrices of the adjacency matrix of $G$,
\Din is the diagonal matrix of in-degrees of the nodes of $G$, and \eye is the identity matrix.
\end{problem}

\subsection{The case of undirected graphs}
\label{subsec:undirected_stmt}

Directed graphs are particularly relevant to model social networks,
where the direction of edges captures the influence or flow of information among individuals.
However, undirected social networks are also interesting to study,
where the presence of a link between two individuals does not imply any directional relationship.
In the case of undirected graphs, the assumption of \emph{row stochasticity}
of the adjacency matrix is substituted with \emph{double stochasticity},
which imposes that both the in-degrees and out-degrees are equal to one.
Hence, we obtain the original definition of disagreement
as in earlier works~\cite{chen2018quantifying,musco2018minimizing}
based on the Laplacian matrix $D(\vec{z}, \mat{A})=\vec{z}^{\top}\mat{L}\vec{z}$.
As a consequence, it is worth noting that our definition of the objective function
(sum of polarization and disagreement) for directed graphs is a proper generalization
of the one for undirected graphs proposed earlier.
In fact, when $\Din=\eye$
the objective function in Equation~\eqref{eq:objective}
coincides with the definition for undirected graphs,
\[
f(\mat{A}, \vec{s}) = \vec{s}^{\top} (\eye+\mat{L})^{-1} \vec{s},
\:\: \text{ where } \: \mat{L}=\eye-\mat{A},
\]
making our approach applicable to both types of social networks.
Nonetheless, the matrix convexity of the objective function
in the undirected case makes it particularly interesting to study.
More details can be found in Section \ref{sec:undirected}.

\section{Algorithm}
\label{sec:methods}
In this section, we present an efficient algorithm that uses projected gradient descent to optimize a non-convex-constrained objective function with a matrix-valued variable. We delve into the gradient of the objective function and discuss its essential properties. Subsequently, we outline efficient evaluation techniques for both the objective and the gradient. We then describe the projection step, which ensures that the solution adheres to the imposed constraints. Furthermore, we discuss the scalability features of our method and provide a comprehensive comparison with existing literature on undirected graphs.

\subsection{Projected gradient method}
\label{subsec:method}

\spara{Gradient computation.}
Gradient-based approaches are first-order methods that compute the gradient of the objective and
update the solution with a small increment in the opposite direction of the gradient.
In our case, the variable is represented by an ${n}\times {n}$ matrix.

\begin{proposition}
The gradient of the objective in Eq.~\eqref{eq:objective} is
\begin{align*}
\frac{\partial f}{\partial \mat{A}} =\: &
     (2 \eye-\mat{A})^{-\top} \vec{s} \vec{s}^\top  (2 \eye-\mat{A})^{-\top} \\
     & + 2(2 \eye-\mat{A})^{-\top} \frac{({\Din}-\eye)}{2} (2 \eye-\mat{A})^{-1} \vec{s} \vec{s}^\top  (2 \eye-\mat{A})^{-\top} \\
     & + \frac{1}{2} \vec{1} ((2 \eye-\mat{A})^{-1}\vec{s})\odot (2 \eye-\mat{A})^{-1}\vec{s}))^{\top}.
\end{align*}
\end{proposition}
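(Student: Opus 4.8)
The plan is to differentiate the objective

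$$ f(\mat{A},\vec{s}) = \vec{s}^\top \mat{M}^{-\top}\vec{s} + \vec{s}^\top \mat{M}^{-\top}\tfrac{\Din - \eye}{2}\mat{M}^{-1}\vec{s}, \qquad \mat{M} := 2\eye - \mat{A}, $$

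with respect to the matrix variable $\mat{A}$, treating it as a sum of three pieces via the product rule. The two crucial matrix-calculus facts I would use are: (i) $\partial \mat{M}^{-1} = -\mat{M}^{-1}(\partial\mat{M})\mat{M}^{-1}$, and since $\partial \mat{M} = -\partial \mat{A}$ we get $\partial \mat{M}^{-1} = \mat{M}^{-1}(\partial\mat{A})\mat{M}^{-1}$; (ii) for a scalar of the form $\vec{u}^\top \mat{B}\, \mat{X}\, \mat{C}\,\vec{v}$ that is linear in $\mat{X}$, the gradient with respect to $\mat{X}$ is $\mat{B}^\top \vec{u}\,\vec{v}^\top \mat{C}^\top$. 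Combining these gives, for a term $g(\mat{A}) = \vec{u}^\top \mat{M}^{-1}\vec{v}$ with $\vec u,\vec v$ constant, the gradient $\mat{M}^{-\top}\vec{u}\,\vec{v}^\top \mat{M}^{-\top}$.

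First I would handle the polarization term $\vec{s}^\top \mat{M}^{-\top}\vec{s}$. Writing it as $\vec{s}^\top(\mat{M}^{-1})^\top\vec{s} = \vec{s}^\top \mat{M}^{-1}\vec{s}$ is not valid since $\mat{M}$ is not symmetric here — but the scalar equals its own transpose, so $\vec{s}^\top \mat{M}^{-\top}\vec{s} = \vec{s}^\top \mat{M}^{-1}\vec{s}$, and applying fact (i)–(ii) with $\vec u = \vec v = \vec s$ yields the first line $\mat{M}^{-\top}\vec{s}\,\vec{s}^\top \mat{M}^{-\top}$. Next, for the disagreement term $h(\mat{A}) = \vec{s}^\top \mat{M}^{-\top}\,\tfrac{\Din-\eye}{2}\,\mat{M}^{-1}\vec{s}$, I would note that $\Din$ itself depends on $\mat{A}$ (it is the diagonal of in-degrees, i.e. column sums), so the product rule has \emph{three} contributions: differentiating the left $\mat{M}^{-\top}$, the middle $\Din$, and the right $\mat{M}^{-1}$. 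For the left and right factors, holding the other two fixed, symmetry of the scalar under transposition makes the two contributions equal, producing the factor of $2$ in the second line: $2\,\mat{M}^{-\top}\tfrac{\Din-\eye}{2}\mat{M}^{-1}\vec{s}\,\vec{s}^\top \mat{M}^{-\top}$. For the middle factor I would write $\vec{w} := \mat{M}^{-1}\vec{s}$ (treated as constant for this contribution) so the term is $\tfrac12\sum_k (d^{\mathrm{in}}_k - 1)\, w_k^2$ where $d^{\mathrm{in}}_k = \sum_i a_{ik}$; differentiating $d^{\mathrm{in}}_k$ with respect to $a_{ij}$ gives $\delta_{jk}$ (independent of $i$), so $\partial/\partial a_{ij}$ of this piece is $\tfrac12 w_j^2$ for every row $i$ — i.e. the rank-one matrix $\tfrac12\,\vec{1}\,(\vec{w}\odot\vec{w})^\top$, which is exactly the third line.

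Finally I would assemble the three contributions and check that the claimed expression matches, in particular confirming that $(\mat{M}^{-1}\vec s)\odot(\mat{M}^{-1}\vec s) = \vec{w}\odot\vec{w}$ lines up with the stated $((2\eye-\mat{A})^{-1}\vec s)\odot(2\eye-\mat{A})^{-1}\vec s)$. \textbf{The main obstacle} I anticipate is the dependence of $\Din$ on $\mat{A}$: it is easy to overlook that the in-degree matrix is not a constant, and getting the index bookkeeping right — that $\partial d^{\mathrm{in}}_k/\partial a_{ij} = \delta_{jk}$ with \emph{no} dependence on the row index $i$, which is what makes that contribution a rank-one matrix with a constant $\vec{1}$ on the left — is the delicate step. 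A secondary subtlety worth stating carefully is the transposition argument that collapses the left-factor and right-factor derivatives into a single term with a factor $2$, since $\mat{M}$ is non-symmetric and one must track which transposes land where.
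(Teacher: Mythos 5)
Your proposal is correct and follows essentially the same route as the paper's proof: the same term-by-term decomposition, the same identities for $\partial_{\mat{X}}(\vec{a}^\top\mat{X}^{-1}\vec{b})$ and its transpose, the same product rule yielding two equal contributions (hence the factor $2$) plus a separate contribution from the dependence of $\Din=\mathrm{diag}(\mat{A}^\top\vec{1})$ on $\mat{A}$, with the same index bookkeeping $\partial d^{\mathrm{in}}_k/\partial a_{ij}=\delta_{jk}$ producing the rank-one term $\vec{1}(\vec{w}\odot\vec{w})^\top$. If anything, your retention of the $\tfrac12$ in that last term is consistent with the proposition as stated, whereas the paper's intermediate display drops it.
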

\begin{proof}
First, note that for the gradient of matrix-valued functions, we have
\cite[Eq.\ (2.2) and (2.9)]{brookes2005matrix}:
\begin{align}
\label{eq:differential}
\partial_{\mat{X}}(\vec{a}^\top \mat{X}^{-1}\vec{b}) &
    = -\mat{X}^{-\top} \vec{a}\vec{b}^\top \mat{X}^{-\top} \text{, and}\\
\label{eq:differential2}
\partial_{\mat{X}} (\vec{a}^\top \mat{X}^{-\top}\vec{b}) &
    = -(\mat{X}^{-1} \vec{a}\vec{b}^\top \mat{X}^{-1})^\top =  -\mat{X}^{-\top} \vec{b}\vec{a}^\top \mat{X}^{-\top}.
\end{align}

To prove the proposition,
we focus on each term of the objective separately.

\spara{I.}
By applying Eq.~\eqref{eq:differential2},
the gradient of the first term in Eq.~\eqref{eq:objective} is:
\begin{equation}
        \partial_\mat{A} (\vec{s}^\top (2\eye-\mat{A})^{-\top} \vec{s} ) = (2\eye-\mat{A})^{-\top} \vec{s}\vec{s}^\top (2\eye-\mat{A})^{-\top}.
        \label{eq:gradient_term1}
\end{equation}

\spara{II.}
For the second term of the objective,
we apply the gradient of the product \cite[Eq.~(37)]{petersen2008matrix}:
\begin{align}
\label{eq:sum_grads}
& \partial_{\mat{A}}\left[\vec{s}^{\top}(2\eye - \mat{A})^{-\top}\frac{ (\Din - \eye)}{2} (2\eye - \mat{A})^{-1}\vec{s}\right] \\
& = \partial_{\mat{A}}\!\left(\vec{s}^{\top}\!(2\eye - \mat{A})^{-\top}\vec{x}_1\right) +
         \partial_{\mat{A}}\!\left(\vec{x}_1^{\top}\!(2\eye - \mat{A})^{-1}\vec{s}\right) +
         \partial_{\mat{A}}\!\left(\vec{x_2}^{\top} \!\frac{ (\Din - \eye)}{2} \vec{x_2}\right), \nonumber
\end{align}
where we consider the terms $\vec{x}_1=\frac{ (\Din - \eye)}{2} (2\eye - \mat{A})^{-1}\vec{s}$ and
$\vec{x}_2=(2\eye - \mat{A})^{-1}\vec{s}$ as constant with respect to $\mat{A}$.

\spara{II.A.}
By applying Eq.~\eqref{eq:differential2} and \eqref{eq:differential},
we can compute the first two derivatives in Eq.~\eqref{eq:sum_grads} as follows:
\begin{equation}
\label{eq:gradient_term2}
        2(2 \eye-\mat{A})^{-\top}\frac{({\Din}-\eye)}{2}(2 \eye-\mat{A})^{-1}\vec{s}\vec{s}^\top(2 \eye-\mat{A})^{-\top}.
\end{equation}

\spara{II.B.}
For the last term in Eq.~\eqref{eq:sum_grads},
to compute the gradient of $(\mat{D}_{\mathrm{in}}-\eye)/2$ while keeping $\vec{x}_2$ constant, we first note that $\eye$ does not depend on $\mat{A}$, resulting in a zero gradient. Furthermore, we express $\Din$ as $\mathrm{diag}(\mat{A}^\top \vec{1})$.
We derive the gradient with respect to each entry in $\mat{A}$:
\begin{align*}
    \frac{d}{da_{ij}}\frac{1}{2} \left(\vec{x}_2^\top\mathrm{diag}(\mat{A}^\top \vec{1}) \vec{x}_2 \right)
        &  =\frac{1}{2}\vec{x}_2^\top  \left(\frac{d}{da_{ij}}\mathrm{diag}(\mat{A}^\top \vec{1})  \right) \vec{x}_2 \\
        & = \frac{1}{2}\vec{x}_2^\top  \delta_{ji} \vec{x}_2 = (\vec{x}_2 \odot \vec{x}_2)_{jj} \,.
\end{align*}
In the above derivation, we consider the following facts:
(i) The derivative of the diagonal matrix yields a value different than zero only on the diagonal, where it corresponds to a Kronecker delta. Additionally, since the matrix $\mat{A}$ is transposed, the derivative is $1$ only when the indices $(j,i)$ of $\delta_{ji}$ correspond to the indices of the differential $da_{ij}$.
(ii) The term $\vec{x}_2^\top \delta_{ji} \vec{x}_2$ represents the element-wise multiplication between the $j$-th entry of the row vector $\vec{z}$ and the $i$-th entry of the same vector, but transposed, when $j=i$. This operation is equivalent to multiplying each entry of the vector by itself, resulting in squared entries. Thus, the $(i,j)$ entry of the final gradient is the $j$-th element of the vector obtained from squaring the entries of $\vec{x}_2$. Consequently, we have:
\begin{equation}
        \partial_{\mat{A}}\left(\vec{x_2}^{\top} \frac{ (\Din - \eye)}{2} \vec{x_2}\right) = \vec{1} (\vec{x}_2 \odot \vec{x}_2)^\top
\label{eq:gradient_term3}
\end{equation}
Summing the terms in Eq.~\eqref{eq:gradient_term1}, \eqref{eq:gradient_term2}, and \eqref{eq:gradient_term3}  yields the result.
\end{proof}

\epara{Remark 1}: As previously stated, the presence of $2\eye$ in the objective function ensures the existence of the inverse and the continuity of the function with respect to $\mat{A}$.

\separa{Remark 2}: For the entries of $\mat{A}$ that are equal to zero
(due to the constraint on the feasible set)
the corresponding differentials are~zero.

\spara{Fast evaluation of the gradient and the objective.}
Moving forward, we discuss how to evaluate fast the gradient and the objective function.
We introduce three auxiliary vectors, $\vec{z_1}$, $\vec{z_2}$, and $\vec{z_3}$,
which are derived from the following linear systems:
\begin{align*}
\vec{z_1}=(2\eye-\mat{A}^{\top})^{-1}\vec{s}, & \quad
\vec{z_2}=(2\eye-\mat{A})^{-1}\vec{s}, \:\text{ and} \\
\vec{z_3}=(2\eye-&\mat{A}^{\top})^{-1}({\Dout}-\eye)\vec{z_2}.
\end{align*}
The gradient can be expressed as the sum of three external products
\begin{equation}
     \frac{\partial f}{\partial \mat{A}} = \vec{z_1}\vec{z_2}^\top+\vec{z_3}\vec{z_2}^\top + \vec{1}(\vec{z_2}^2)^\top.
     \label{eq:gradient2}
\end{equation}
Additionally, the objective function can be written as the sum of two inner products
\begin{equation}
    f(\mat{A}, \vec{s})=\vec{s}^\top \vec{z_1} + \vec{s}^\top \vec{z_3}.
    \label{eq:objective2}
\end{equation}

To avoid the need for matrix inversion,
we propose solving each linear system of the form
$(2\eye-\mat{X})\vec{z}=\vec{s}$ directly using a linear solver.
Specifically, we employ the Biconjugate Gradient (BiCG)  solver \cite{fletcher1976numerical},
leveraging the sparsity of the matrix and the efficient instantiation of the solver,
at each iteration as proposed by \citet{marumo2021projected}.

\spara{Projection into the feasible set.}
The projection step plays a crucial role in finding the closest element
in the feasible set $\convexset(\mat{A})$,
which we shorthand by \feasibleset.
This step can be efficiently accomplished by sequentially applying the following three projections:
\begin{enumerate}
\item set all negative entries to zero, as the adjacency matrix should be non-negative;
\item set all edges not in $E$ to zero,
    ensuring that only existing edges remain in the matrix:
    for all $(i,j)\not\in E$ set $\mat{A}[i,j]=0$;
\item normalize each row of the adjacency matrix by dividing it by its $\ell_1$ norm.
\end{enumerate}

\spara{Algorithm.} The algorithm, presented in Algorithm \ref{alg:algorithm}, follows these steps:
(i) Initialize the solution with the input weights and an empty set of objective values,
denoted as $\Gamma$ (lines 1--2).
The input weights can be chosen uniformly at random, or using the time-inverted FJ model;
details are provided in Section \ref{sec:experiments}.
(ii) While the reduction in the objective value remains greater than a predefined threshold $\delta$, the algorithm performs the following steps:

\separa{\textbf{S1.}} Invoke the Biconjugate Gradient (BiCG) algorithm to solve the linear systems
$(2\eye+\mat{X})\vec{z}=\vec{s}$ for each iteration,
facilitating the computation of the gradient as specified in Eq.~\eqref{eq:gradient2} (lines 5--7).

\separa{\textbf{S2.}} Take a step of gradient descent and project the solution into the feasible set
using the algorithm defined in Algorithm \ref{alg:proj} (lines 9--10). Modify the solution proportionally to the input budget (line 11).

\separa{\textbf{S3.}} Store the objective value by utilizing the solutions obtained in lines 5 and 7, allowing for efficient evaluation of the objective function (line 11).

\begin{algorithm}[t]
   \footnotesize
    \caption{Laplacian-Constrained Gradient Descent (\ourmethod)}
    \begin{flushleft}
    \algorithmicrequire \; $\mat{A}^{\mathrm{init}}$; opinions $\vec{s}$, tolerance $\delta$, step-size $\eta$, budget $\beta$. \\
    \algorithmicensure \;$\mat{A}$.
    \begin{algorithmic}[1]
        \State  $ \mat{A}[i,j]\leftarrow \mat{A}^{\mathrm{init}}[i,j]$
        \State $\Gamma \leftarrow \emptyset$
        \State $t\leftarrow 1$
        \While{$\Gamma^{(t-1)} - \Gamma^{(t)} > \delta$}
            \State $\vec{z_1} \leftarrow \mathrm{BiCG\_SOLVER} (2\eye-\mat{A}^{\top}, \vec{s})$
\State $\vec{z_2}\leftarrow \mathrm{BiCG\_SOLVER}(2\eye-\mat{A},\vec{s})$
\State $\vec{z_3}\leftarrow \mathrm{BiCG\_SOLVER}(2\eye-\mat{A}^{\top}, ({\Dout}-\eye)\vec{z_2})$
            \State $ \partial_A f \leftarrow \vec{z_1}\vec{z_2}^{\top}+\vec{z_3}\vec{z_2}^{\top} + \vec{1}(\vec{z_2}^2)^{\top} $  \hfill{Gradient Eq.~\eqref{eq:gradient2}}
            \State $\Tilde{\mat{A}} \leftarrow \mat{A} - \eta \partial_A f$
            \State $\mat{A}_{\mathrm{proj}} \leftarrow \mathrm{proj}_{\mathrm{{\feasibleset}}}
            (\Tilde{\mat{A}})$ \hfill{Projection step}
            \State $\mat{A} \leftarrow \beta \mat{A}_{\mathrm{proj}} + (1-\beta)\mat{A}^{\mathrm{init}}$ \hfill{Partial modification}
            \State $  f(\mat{A}, \vec{s}) \leftarrow \vec{s}^\top \vec{z_1} + \vec{s}^\top \vec{z_3}$ \hfill{Current objective value Eq.~\eqref{eq:objective2}}
            \State $\Gamma \leftarrow \Gamma \cup \{ f(\mat{A}, \vec{s})\}$
            \State $t\leftarrow t + 1$

        \EndWhile \\
        \Return $\mat{A}$.
    \end{algorithmic}
    \end{flushleft}
  \normalsize
  \label{alg:algorithm}
\vspace{-1mm}
\end{algorithm}

\begin{algorithm}[t]
   \footnotesize
    \caption{Project on the feasible solution set ($\mathrm{proj}_{\mathrm{{\feasibleset}}}$)}
    \begin{flushleft}
    \algorithmicrequire \; $\Tilde{\mat{A}}$, $E$. \\
    \algorithmicensure \;$\mat{A}$.
    \begin{algorithmic}[1]
        \State  $\Tilde{\mat{A}}[i,j] \leftarrow 0  \:
            \text{ for all } i, j \text{ such that } \Tilde{\mat{A}}[i,j]<0$
            \hfill{Removing negative weights}
        \State  $\Tilde{\mat{A}}[i,j] \leftarrow 0  \: \text{ for all } i, j \not\in {E}$ \hfill{Removing non-existing edges}
        \State  $ \mat{A} \leftarrow \Tilde{\mat{A}} \oslash (\Tilde{\mat{A}} \vec{1}) $ \hfill{Normalizing rows}\\
        \Return $\mat{A}$
    \end{algorithmic}
    \end{flushleft}
  \normalsize
\vspace{-1mm}
\label{alg:proj}
\end{algorithm}

\subsection{Run-time complexity}\label{subsec:runtime}
Algorithm \ref{alg:algorithm} demonstrates near-linear time complexity in the number of edges per iteration, as supported by the following proposition:
\begin{proposition}
The run-time complexity for each step of Alg. \ref{alg:algorithm} is $\bigO((3T+2){n} + (3T+4){m})$,
where $T$ denotes the number of iterations of the
Biconjugate Gradient Stabilized (BiCGStab)  solver~\cite{marumo2021projected}.
\end{proposition}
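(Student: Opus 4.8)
The plan is to account for the cost of one iteration of the while-loop in Algorithm~\ref{alg:algorithm} by breaking it into its constituent operations: the three linear solves (lines 5--7), the gradient assembly (line 8), the gradient step (line 9), the projection (Algorithm~\ref{alg:proj}, called in line 10), the partial-modification combination (line 11), and the objective evaluation (line 12). The key structural fact I would exploit is that we never form any matrix inverse or any dense $n\times n$ matrix: the gradient in Equation~\eqref{eq:gradient2} is stored implicitly as the sum of three outer products $\vec{z_1}\vec{z_2}^\top+\vec{z_3}\vec{z_2}^\top+\vec{1}(\vec{z_2}^2)^\top$, so every step that ``touches'' the gradient is really an operation on $O(1)$ vectors of length $n$, except where the sparsity pattern of $\mat{A}$ restricts us to the $m$ nonzero entries.

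First I would handle the three BiCGStab solves. Each solve of a system $(2\eye-\mat{X})\vec{z}=\vec{b}$ with $\mat{X}$ having $m$ nonzeros costs, per inner iteration, one sparse matrix-vector product ($O(n+m)$: the $2\eye$ contributes $O(n)$, the sparse part $O(m)$) plus a constant number of length-$n$ vector operations ($O(n)$); over $T$ inner iterations this is $O(T(n+m))$ per solve, hence $O(3T(n+m))=O(3Tn+3Tm)$ for the three of them, matching the $3Tn$ and $3Tm$ terms in the claimed bound. I would also note line~7 needs the extra matvec $(\Dout-\eye)\vec{z_2}$, which is $O(n)$ and absorbed. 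Next, the remaining per-iteration work: assembling and applying the gradient step only where $\mat{A}$ is nonzero costs $O(n+m)$ (we compute, for each edge $(i,j)$, the scalar $z_{1,i}z_{2,j}+z_{3,i}z_{2,j}+z_{2,j}^2$ and subtract $\eta$ times it); the projection in Algorithm~\ref{alg:proj} is a pass over the $m$ nonzeros (zeroing negatives, zeroing non-edges) plus computing the $n$ row sums $\tilde{\mat{A}}\vec{1}$ and dividing, so $O(n+m)$; the convex combination in line~11 is again $O(m)$ over the support; and the objective in Equation~\eqref{eq:objective2} is two inner products of length-$n$ vectors, $O(n)$. Collecting the non-solver terms gives a constant number of $O(n+m)$ passes, which contributes the remaining $2n+4m$ (the precise small constants coming from counting: the gradient-step pass, the three passes inside the projection, the partial-modification pass, etc., bookkept to give $2n+4m$).

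The main obstacle — really the only non-mechanical point — is making sure the outer-product representation is genuinely never expanded, i.e., that line~9 ($\tilde{\mat{A}}\leftarrow\mat{A}-\eta\,\partial_A f$) and line~11 are implemented as updates restricted to the edge set $E$ rather than as dense $n\times n$ matrix arithmetic; this is legitimate precisely because of Remark~2 (the differentials vanish off the support of $\mat{A}$) together with step~2 of the projection, which would zero those entries anyway, so working only on $E$ loses nothing. Once that is granted, the bound follows by summing: $3T(n+m)$ from the solves plus $O(1)$ further $(n+m)$-passes, and then tracking the integer coefficients on each pass yields exactly $\bigO((3T+2)n+(3T+4)m)$.
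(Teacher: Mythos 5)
Your proposal is correct and follows essentially the same decomposition as the paper's proof: three BiCGStab solves at $\bigO(T({n}+{m}))$ each, outer products evaluated only on the ${m}$ entries of the edge support, an $\bigO({m})$ projection pass, and $\bigO({n})$ inner products for the objective, summed to give the stated bound. Your explicit justification (via Remark 2 and the projection's zeroing of non-edges) for never materializing the dense gradient is the same point the paper makes more tersely when it says only the ${m}$ entries of each outer product are computed and stored.
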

\begin{proof}
We start with the projection algorithm presented in Algo\-rithm \ref{alg:proj}.
The variable $\mat{A}$ is stored in a sparse format,
requiring $\bigO({m})$ space, which leads to $\bigO({m})$ time for line 1.
In practice, this computation is significantly reduced due to the fact that decreasing polarization and disagreement necessitates increased communication among users,
resulting in fewer null edge weights.
Furthermore, line 2 is skipped in practice since non-existing edges are not stored.
Last, the row-normalization step in line 3 involves summing over the neighbors for each node,
resulting in a $\bigO({m})$ time complexity for the entire graph.

Moving to Algorithm \ref{alg:algorithm},
the initialization assignments in line 1 require $\bigO({m})$ time.
In each iteration, the Biconjugate Gradient Stabilized (BiCGStab) algorithm solves one of the three linear systems in time $\bigO(T({n}+{m}))$,
where $T$ denotes the number of iterations of the solver.
In line 8, only the ${m}$ entries of each of the three outer products are computed and stored,
as per the constraint in the feasible set.
The projection in line 10 requires $\bigO({m})$ time, as mentioned earlier,
and the inner products in line 11 take time $\bigO({n})$.
Therefore, the total time of each iteration is $\bigO((3T+2){n} + (3T+4){m})$.
\end{proof}

\subsection{The case of undirected graphs}
\label{sec:undirected}

So far we have presented our solution for minimizing polarization and disagreement for directed graphs.
While this is the first formulation that addresses the problem for directed graphs,
it is important to establish a connection with the existing literature,
so as to enable a meaningful comparison and evaluation of the results in the experimental section.
To ensure comparability between the different solutions,
we consider the doubly-stochastic version of the adjacency matrix.
By adopting this approach, we align our formulation with the literature,
which typically focuses on undirected graphs and utilizes symmetric adjacency matrices.

Consider an undirected graph $G=(V, E)$ with a symmetric doubly-stochastic adjacency matrix $\mat{A}$ and a vector of inner opinions $\vec{s} \in \mathbb{R}^n$.
The objective is to find a symmetric matrix $\mat{A}^*$ such that
\begin{equation}
\label{eq:problem}
\mat{A}^* = \arg \min_{\mat{X} \in \convexset(\mat{A})}
\vec{s}^{\top} (2\eye-\mat{A})^{-1}\vec{s},
\end{equation}
where the feasible set $\convexset(\mat{A})$ is now defined as
\begin{align*}
\label{eq:feasible-set2}
\convexset(\mat{A}) = \{ \mat{X} \in \mathbb{R}^{{n}\times {n}} \mid  \:\: &
    \mat{X}\vec{1}=\vec{1} \text{ and } \mat{X}^{\top}\vec{1}=\vec{1} \text{ and } \\
    &  \mat{A}[i,j]=0 \text{ implies } \mat{X}[i,j]=0  \}.
\end{align*}

It has been shown in the literature \cite{nordstrom2011convexity, haynsworth1970applications}
that the objective function of this problem is convex,
while the convexity of the feasible set can be easily shown,
mutatis mutandis to Proposition \ref{proposition:feasible-set-convexity}.
Therefore, an approximate solution can be obtained in polynomial time using semi\-definite programming (SDP).

\begin{figure*}[t!]
\centering
\includegraphics[width=.9\textwidth]{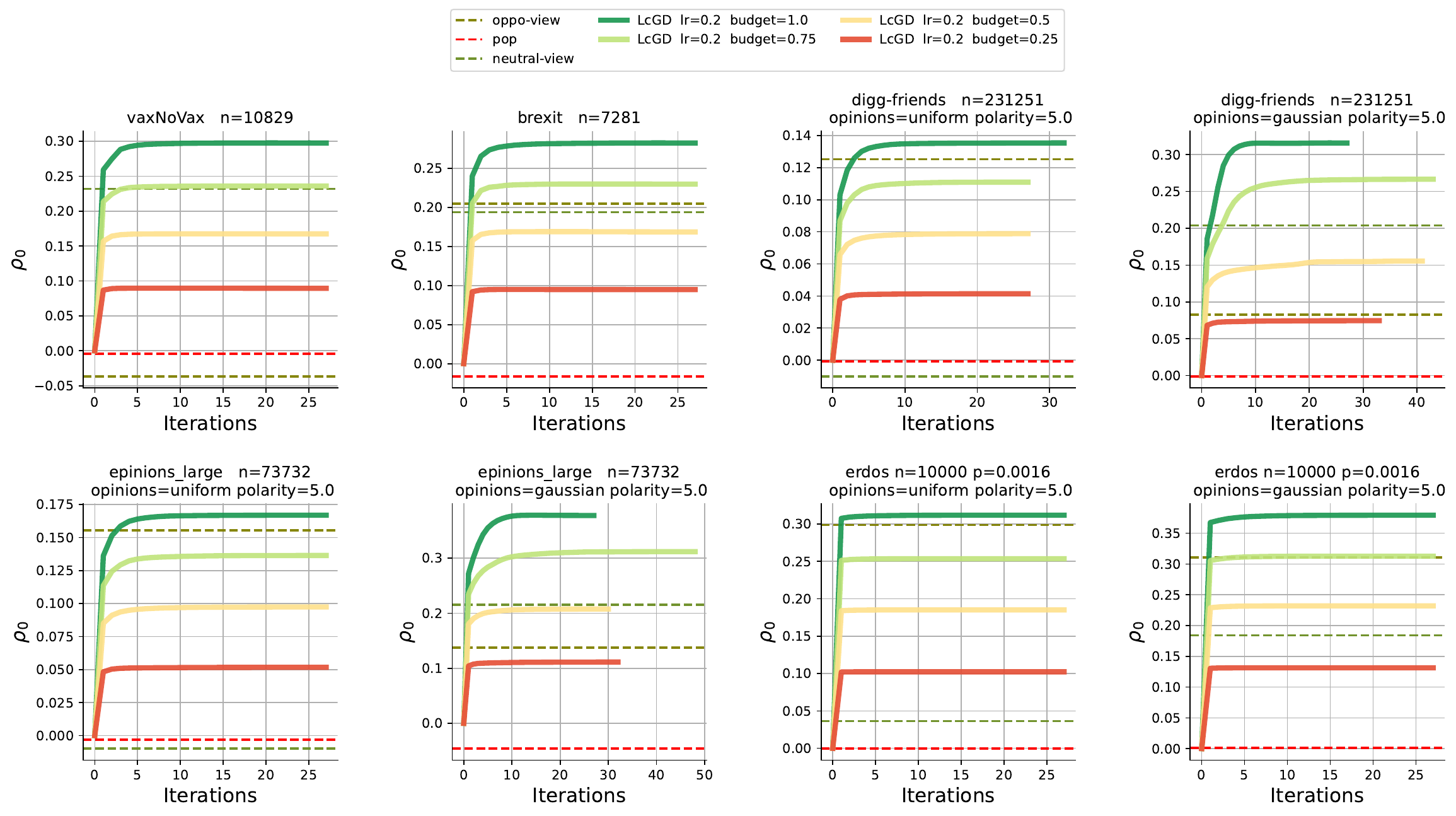}
\vspace{-4mm}
\caption{Reduction in polarization and disagreement (in the counterfactual scenario) vs number of iterations, using the stopping condition $\delta = 10^{-6} |E|$. For vaxNoVax and Brexit networks, we have real-world estimated opinions. For the other semi-synthetic datasets we report both uniform and gaussian distributions of opinions. }
\label{fig:objective_curves}
\vspace{-2mm}
\end{figure*}

\spara{Semidefinite programming.}
In SDP, we aim to minimize a linear function subject to constraints of the form
$\mat{A}\vec{x} = \vec{b}$ and $\mat{X} \succeq 0$,
where $\mat{X}$ is a symmetric matrix, and
the notation $\mat{X} \succeq 0$ indicates that $\mat{X}$ is positive semi\-definite.
In the case of undirected graphs, the matrix variable $\mat{A}$ is positive semi\-definite.
To demonstrate that the problem can be formulated as an SDP,
we refer to the classic textbook of \citet[Example 3.4, page 76]{boyd2004convex}.
The general problem we want to solve can be written as
\begin{equation}
\min_{t \in \mathrm{epi}\, f(\mat{A})} t ,
\label{eq:sdp}
\end{equation}
where $t\in \mathbb{R}$, and the \emph{epigraph} of the objective function
$f(\mat{A}) = \vec{s}^{\top} (2\eye-\mat{A})^{-1} \vec{s}$ is defined as
\begin{align*}
\mathrm{epi}\, f & =
\{ (\vec{s},2\eye-\mat{A},t) \mid \eye+\mat{L} \succ 0, \: \vec{s}^\top (2\eye-\mat{A})^{-1} \vec{s}\leq t\} \\
& = \{ (\vec{s},2\eye-\mat{A},t) \mid
    \begin{bmatrix}
        2\eye-\mat{A} & \vec{s} \\
        \vec{s}^{\top} & t
    \end{bmatrix} \succeq 0, \: 2\eye-\mat{A} \succ 0 \},
\end{align*}
where we utilize the Schur's complement.
The last condition represents a linear matrix inequality in $(\vec{s}, 2\eye-\mat{A}, t)$,
and the set is convex.
This convex-optimization problem can be efficiently solved,
approximately within any constant factor,
using various first-order methods such as SCS~\cite{o2016conic} and ADMM~\cite{wen2010alternating}.

To facilitate the solution process, several standard SDP libraries,
including CVX \cite{grant2014cvx},
implement these exact solvers.
It is worth noting, however, that these libraries may encounter scalability issues,
as highlighted in recent research \cite{majumdar2019survey}.

\section{Experimental evaluation}
\label{sec:experiments}
We structure our experimental evaluation to answer the following research questions:

\begin{itemize}
\item[\textbf{RQ1:}] What is the reduction in polarization and disagreement for different input networks?
How does this reduction compare to the equilibrium state with no interventions?
\item[\textbf{RQ2:}] What is the effect of \emph{network modularity} and \emph{initial polarization} on the behavior of the algorithm? What is the effect of the budget parameter?
\item[\textbf{RQ3:}] How does the algorithm perform for undirected graphs?
\end{itemize}

\begin{table}[t]
\caption{Network statistics.}
\vspace{-3mm}
\centering
\resizebox{\columnwidth}{!}{\begin{tabular}{lcrrrl}
\toprule
 Network name & Type & \# nodes & \# edges & Description \\
\midrule
brexit & R & 7\,281 & 530\,608 & retweets (UK Brexit referendum) \\
vaxNoVax & R &10\,829 & 1\,430\,776&  retweets (Italian vaccination debate)\\
Albert-Barab\'{a}si & S&  7\,047 & 10\,763 & synthetic scale-free graph\\
SBM & S & 9\,999 & 81\,867 &  synthetic stochastic block model\\
Erd\H{o}s-R\'{e}nyi & S& 10\,000 & 160\,107 &  synthetic binomial graph\\
ego-twitter & SS & 11\,548 & 17\,100 &  Twitter following network \\

facebook-wosn-wall & SS& 37\,082 & 260\,307 &   Facebook posts to other user's wall\\
soc-epinions1 & SS & 49\,604 & 471\,855 &   trust and distrust network \\
 epinions &  SS & 73\,732 & 767\,982 &  trust network from Epinions\\
munmun & SS & 464\,882 & 834\,548 &  Twitter mentions\\
 digg-friends & SS & 231\,251 & 1\,666\,463 & directed friendship graph of Digg\\
 flickr-growth & SS & 2\,070\,819 & 31\,335\,568 & Flickr social network \\
\bottomrule
\end{tabular}
}
\label{table:stats}
\vspace{-3mm}
\end{table}

\subsection{Experimental setup}
In this section, we describe our experimental pipeline.
We begin by describing the types of data we consider.
We then present our baselines and evaluation measures,
and finally, we present the overall experimental scheme.
Our code is made publicly available.\footnote{\url{https://github.com/FedericoCinus/rebalancing-social-feed}}

\begin{table*}[t]
\caption{Results on the reduction of polarization and disagreement for directed graphs with Gaussian opinion distribution, using a learning rate $\eta=0.2$ and stopping rule  $\delta=10^{-6}|E|$.
} \label{tab:datasets}
\vspace{-3mm}
\let\center\empty
\let\endcenter\relax
\centering
\resizebox{.75\width}{!}{\begin{tabular}{lrrrrrrrrrrrrr}
\toprule
\multicolumn{2}{c}{Input configuration} &  \multicolumn{4}{c}{\rhocounterfactual 
 ($\uparrow$ \textbf{better})} & \multicolumn{4}{c}{\rhooverall ($\uparrow$ \textbf{better})} & \multicolumn{4}{c}{Time (sec)} \\ \cmidrule(lr){1-2}\cmidrule(lr){3-6}\cmidrule(lr){7-10}\cmidrule(lr){11-14}
 \multicolumn{1}{l}{Network} & \multicolumn{1}{c}{Polarity} & \ourmethod & neutral-view & oppo-view & pop & \ourmethod & neutral-view & oppo-view & pop & \ourmethod & neutral-view & oppo-view & pop \\
 \midrule
\multirow[c]{2}{*}{Albert-Barab\'{a}si} & 1.0 & \textbf{0.2375} & 0.1688 & 0.0677 & -0.0710 & \textbf{0.8195} & 0.8037 & 0.7784 & 0.7493 & 3.15 & 0.00 & 0.00 & 0.00 \\
 & 5.0 & \textbf{0.4001} & 0.2215 & 0.2845 & -0.1007 & \textbf{0.8008} & 0.7415 & 0.7624 & 0.6346 & 7.04 & 0.00 & 0.00 & 0.00 \\
\multirow[c]{2}{*}{brexit} & 1.0 & \textbf{0.2824} & 0.1937 & 0.2045 & -0.0162 & \textbf{0.8216} & 0.7996 & 0.8023 & 0.7474 & 43.08 & 0.02 & 0.04 & 0.02 \\
 & 5.0 & \textbf{0.2824} & 0.1937 & 0.2045 & -0.0162 & \textbf{0.8216} & 0.7996 & 0.8023 & 0.7474 & 38.09 & 0.02 & 0.06 & 0.03 \\
\multirow[c]{2}{*}{digg-friends} & 1.0 & \textbf{0.1810} & 0.1164 & 0.0251 & -0.0097 & \textbf{0.8287} & 0.8151 & 0.7960 & 0.7887 & 639.54 & 0.08 & 0.04 & 0.04 \\
 & 5.0 & \textbf{0.3104} & 0.1931 & 0.0916 & 0.0039 & \textbf{0.7830} & 0.7461 & 0.7141 & 0.6865 & 354.45 & 0.07 & 0.07 & 0.06 \\
\multirow[c]{2}{*}{ego-twitter} & 1.0 & \textbf{0.0747} & 0.0193 & 0.0159 & -0.0114 & \textbf{0.8174} & 0.8065 & 0.8058 & 0.8004 & 60.03 & 0.00 & 0.00 & 0.00 \\
 & 5.0 & \textbf{0.1478} & 0.0401 & 0.0540 & -0.0137 & \textbf{0.6671} & 0.6250 & 0.6304 & 0.6040 & 79.31 & 0.00 & 0.00 & 0.00 \\
\multirow[c]{2}{*}{epinions} & 1.0 & \textbf{0.2293} & 0.1657 & 0.0315 & 0.0152 & \textbf{0.8294} & 0.8154 & 0.7857 & 0.7821 & 740.10 & 0.03 & 0.02 & 0.02 \\
 & 5.0 & \textbf{0.3816} & 0.2204 & 0.1446 & -0.0494 & \textbf{0.8041} & 0.7530 & 0.7290 & 0.6676 & 281.98 & 0.02 & 0.02 & 0.02 \\
\multirow[c]{2}{*}{Erd\H{o}s-R\'{e}nyi} & 1.0 & \textbf{0.2034} & 0.1435 & 0.0340 & 0.0009 & \textbf{0.8202} & 0.8063 & 0.7790 & 0.7710 & 7.71 & 0.00 & 0.00 & 0.00 \\
 & 5.0 & \textbf{0.2623} & 0.1163 & 0.1919 & 0.0034 & \textbf{0.7527} & 0.7056 & 0.7294 & 0.6697 & 5.38 & 0.00 & 0.00 & 0.00 \\
\multirow[c]{2}{*}{facebook-wosn-wall} & 1.0 & \textbf{0.2568} & 0.1719 & 0.0481 & 0.0009 & \textbf{0.8212} & 0.8007 & 0.7710 & 0.7596 & 334.43 & 0.01 & 0.01 & 0.01 \\
 & 5.0 & \textbf{0.3898} & 0.2438 & 0.1697 & 0.0087 & \textbf{0.7213} & 0.6546 & 0.6208 & 0.5473 & 393.35 & 0.01 & 0.01 & 0.01 \\
\multirow[c]{2}{*}{flickr-growth} & 1.0 & \textbf{0.1896} & 0.0999 & 0.0311 & 0.0162 & \textbf{0.8278} & 0.8087 & 0.7941 & 0.7910 & 3\,373.66 & 2.27 & 0.70 & 0.96 \\
 & 5.0 & \textbf{0.3385} & 0.2106 & 0.1413 & 0.0217 & \textbf{0.7578} & 0.7110 & 0.6856 & 0.6417 & 4\,710.21 & 2.36 & 1.03 & 1.01 \\
\multirow[c]{2}{*}{munmun} & 1.0 & \textbf{0.0982} & 0.0375 & 0.0152 & -0.0008 & \textbf{0.8224} & 0.8105 & 0.8061 & 0.8029 & 383.99 & 0.04 & 0.02 & 0.02 \\
 & 5.0 & \textbf{0.1944} & 0.0326 & 0.1058 & 0.0038 & \textbf{0.7048} & 0.6455 & 0.6724 & 0.6350 & 368.46 & 0.04 & 0.03 & 0.02 \\
\multirow[c]{2}{*}{sbm} & 1.0 & \textbf{0.3174} & 0.2385 & 0.0421 & -0.0018 & \textbf{0.8225} & 0.8020 & 0.7509 & 0.7395 & 56.68 & 0.00 & 0.00 & 0.00 \\
 & 5.0 & \textbf{0.4641} & 0.3313 & 0.1851 & -0.0070 & \textbf{0.7513} & 0.6898 & 0.6215 & 0.5326 & 107.88 & 0.00 & 0.00 & 0.00 \\
\multirow[c]{2}{*}{soc-epinions1} & 1.0 & \textbf{0.2070} & 0.1402 & 0.0265 & -0.0032 & \textbf{0.8285} & 0.8140 & 0.7894 & 0.7830 & 620.93 & 0.02 & 0.01 & 0.01 \\
 & 5.0 & \textbf{0.3504} & 0.1929 & 0.1667 & -0.0791 & \textbf{0.8033} & 0.7556 & 0.7477 & 0.6732 & 279.17 & 0.02 & 0.01 & 0.01 \\
\multirow[c]{2}{*}{vaxNoVax} & 1.0 & \textbf{0.2976} & 0.2319 & -0.0366 & -0.0042 & \textbf{0.8318} & 0.8161 & 0.7519 & 0.7596 & 94.97 & 0.06 & 0.15 & 0.10 \\
 & 5.0 & \textbf{0.2976} & 0.2319 & -0.0366 & -0.0042 & \textbf{0.8318} & 0.8161 & 0.7519 & 0.7596 & 100.95 & 0.06 & 0.12 & 0.12 \\
\bottomrule

\end{tabular}
}
\vspace{-3mm}
\end{table*}

\spara{Real-world data (R, in Table \ref{tab:datasets}).}
We use two real-world networks from Twitter, along with corresponding opinions on social issues
(Brexit and Vaccination) which were estimated with a supervised text-based classifier~\cite{minici2022cascade,cossard2020falling}.
The opinion of a user is taken as the average opinion over all their tweets.

\spara{Synthetic data (S, in Table \ref{tab:datasets}).}
We use three different types of network models:
Erd\H{o}s-R\'{e}nyi, Albert-Barab\'{a}si, and the stochastic block model.
We also consider two types of opinion distributions: uniform in the interval $[-0.5,0.5]$ and Gaussian.
To control the initial level of polarization, we incorporate one parameter that characterizes the
bi\-modality of the distributions.
First, for the uniform distribution, this parameter, denoted as $p$, is used to re\-scale each opinion value $z_u$ using the function $f(z_u)=\mathrm{sign}(z_u) |z_u|^{1/p}$.
Second, to account for structural polarization within the network, we employ
the Kernighan–Lin algorithm~\cite{kernighan1970efficient}
to partition the network into two communities and assign
to the nodes of each community opinions from two distinct Gaussian distributions.
In this setting, the polarization parameter $p$
is proportional to the distance between the means of the two communities.

\spara{Semi-synthetic data (SS, in Table \ref{tab:datasets}).}
We use public real-world networks that are semantically aligned with our study.
To generate opinions, we follow the same approach as for synthetic data.

\spara{Methods.}
We refer to our method as \ourmethod, for Lap\-la\-cian-con\-strain\-ed gradient descent.
To enhance the efficiency of \ourmethod,
we implement an acceleration technique using the ADAM optimizer~\cite{kingma2014adam}.

To evaluate \ourmethod against other methods,
we implement three realistic baselines,
which prioritize links involving specific types of nodes, namely,
\emph{neutral nodes} ($w_{u,v} \propto 1/|s_u|$),
nodes with \emph{opposite views} ($w_{u,v} \propto |s_u -s_v|$), and
\emph{popular nodes} ($w_{u,v} \propto d_u$).

\spara{Measures.}
We quantify the reduction in polarization and disagreement using two measures.
First, we compare the objective function in Eq.~\eqref{eq:objective} computed at equilibrium,
with the objective function obtained at the equilibrium with the optimized weights.
This measure, denoted as ${\rhocounterfactual}$, is defined as
\begin{equation}
{\rhocounterfactual} = 1 - \frac{P(\vec{z}^*) + D(\vec{z}^*, \mat{A}^*)}
						 {P(\vec{z}) + D(\vec{z}, \mat{A})}\,,
\label{eq:measure_counterfactual}
\end{equation}
where $\vec{z}^*=(2\eye - \mat{A}^*)^{-1}\vec{s}$ is the equilibrium produced by the optimized adjacency matrix,
and $\vec{z}=(2\eye - \mat{A})^{-1}\vec{s}$ is the equilibrium produced by the original input matrix.
This measure quantifies the reduction in polarization
while accounting for the natural reduction in disagreement achieved by the FJ model.
Additionally, we consider the reduction in polarization and disagreement
with respect to the values produced by the innate opinions and the original input matrix.
This measure, denoted as ${\rhooverall}$, is defined as
\begin{equation}
{\rhooverall} = 1 - \frac{P(\vec{z}^*) + D(\vec{z}^*, \mat{A}^*)}{P(\vec{s}) + D(\vec{s}, \mat{A})}\,.
\label{eq:measure_initial}
\end{equation}

\begin{figure*}[t]
\centering
\includegraphics[width=\textwidth]{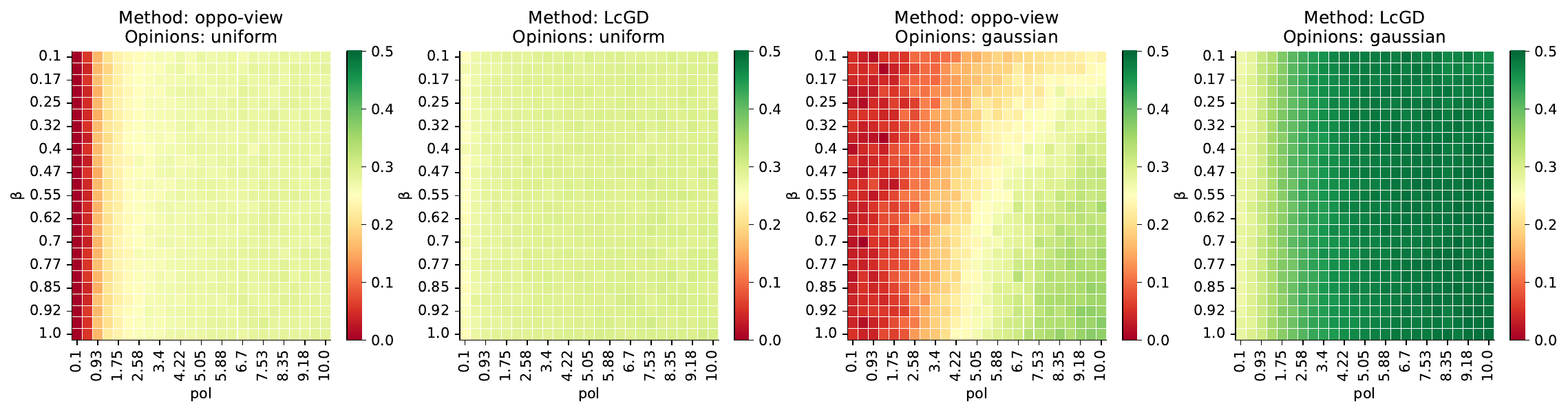}
\vspace{-6mm}
\caption{Modularity ($\beta:=$probability to connect two nodes in different communities) vs.\ polarization level in SBM network.}
\label{fig:modularity}
\vspace{-2mm}
\end{figure*}

\para{Evaluation scheme.}
Our evaluation scheme is as follows.

\para{1.} We sample equilibrium opinions $\vec{z}$ from a uniform/Gaussian distribution or,
in the case of real datasets, we use the given opinions.

\para{2.} We normalize the adjacency matrix to make it row-stochastic.

\para{3.} We use the FJ model to estimate innate opinions as in  Eq.~\eqref{eq:equilibrium2}.

\para{4.} We compute the optimal adjacency matrix $\mat{A}^*$ for each method
and the reduction in polarization and disagreement using the two measures~\rhocounterfactual and~\rhooverall.

\subsection{Results}
\para{RQ1: Performance.}
Our algorithm, \ourmethod, consistently outperforms the baselines in all input configurations.
It achieves a significant reduction in polarization and disagreement
ranging from 10\% to 50\% compared to the equilibrium with no intervention (measure \rhocounterfactual).
Compared to the initial value (measure \rhooverall), \ourmethod achieves an impressive reduction
reaching up to 80\%.
As expected, the most consistent baseline is the one that promotes neutral connections, as it increases the visibility of non-polarized content.
The baseline that strengthens connections between users with opposite views
performs the best in polarized configurations with no echo chambers.
On the other hand, promoting popular content can actually exacerbate polarization and disagreement,
resulting in poor performance across all input instances.

As shown in Fig.~\ref{fig:objective_curves} \ourmethod converges
in a small number of iterations, typically fewer than 10,
even for networks with over 2 million nodes, which corresponds to a runtime of approximately 10 minutes.
Additionally, through a grid search for the learning rate, with values 0.2, 0.02, and 0.002,
we find that a relatively large learning rate works well in terms of both time and performance.

\spara{RQ2: Effect of parameters.}
In practice, limited interventions in users' feeds can be desirable.
To address this, we conduct experiments (Fig.\ref{fig:objective_curves}) where we modify the input weights in the network by a fixed percentage.
This approach results in an adjacency matrix that is a convex combination of the projected row-stochastic gradient and the input matrix at each iteration (line 11 in Alg.~\ref{alg:algorithm}).
Remarkably, \ourmethod outperforms the baselines in configurations with polarized and segregated communities, even when changing only 75\% of the weights.

The \ourmethod algorithm demonstrates superior performance in highly-polarized configurations,
both in terms of network structure (high modularity) and opinion distribution (high polarization).
Such configurations mirror real-world social networks,
making our approach an ideal solution for such instances.
Conversely, less clustered structures,
such as Erd\H{o}s-R\'{e}nyi graphs and uniform opinion distributions,
do not exhibit a sufficiently-high level of polarization to support substantial reductions in the objective.
Despite of this, as shown in Fig. \ref{fig:modularity},
\ourmethod consistently surpasses alternative approaches
even in situations with relatively low polarization levels.
Additionally, the performance of \ourmethod remains unaffected
by the modularity of the network,
reflecting its robustness to real-world scenarios.

\begin{table}[t]
\caption{Results on \rhooverall ($\uparrow$ \textbf{better}) for undirected graphs with two distributions of opinions (uniform and gaussian).\label{table:undirected}}
\vspace{-3mm}
\let\center\empty
\let\endcenter\relax
\centering
\resizebox{\columnwidth}{!}{\begin{tabular}{lrrrrr}
\toprule
 &  & \multicolumn{2}{c}{gaussian} & \multicolumn{2}{c}{uniform} \\
\cmidrule(lr){3-4} \cmidrule(lr){5-6} 
Network & nodes & \ourmethod & SCS & \ourmethod & SCS \\
\midrule
Albert-Barab\'{a}si graph & 100 & \textbf{0.8154} & 0.7990 & \textbf{0.8568} & 0.8560 \\
Erd\H{o}s-R\'{e}nyi graph & 100 & \textbf{0.7677} & 0.7206 & \textbf{0.8584} & 0.8558 \\
Aarhus CS department & 61 & \textbf{0.7650} & 0.7481 & 0.8249 & \textbf{0.8272} \\
dimacs10-football & 115 & \textbf{0.7588} & 0.7293 & 0.8322 & \textbf{0.8358} \\
political books & 105 & \textbf{0.7046} & 0.6101 & \textbf{0.8439} & 0.8369 \\
students cooperation & 141 & \textbf{0.5324} & 0.5269 & \textbf{0.8462} & 0.8335 \\
\bottomrule
\end{tabular}
}

\vspace{3mm}

\caption{Gradient descent scheme comparison on \rhooverall ($\uparrow$ \textbf{better}) for directed graphs with gaussian opinions.}
\let\center\empty
\let\endcenter\relax
\centering
\vspace{-4mm}
\resizebox{.9\columnwidth}{!}{\begin{tabular}{lrrrr}
\toprule
 & \multicolumn{2}{c}{\rhooverall} & \multicolumn{2}{c}{Time (sec)} \\
 \cmidrule(lr){2-3} \cmidrule(lr){4-5} 
Network & \ourmethod & simple GD & \ourmethod & simple GD \\
\midrule
brexit & \textbf{0.8216} & 0.8215 & \textbf{38.09} & 79.23 \\
digg-friends & 0.7830 & \textbf{0.7873} & \textbf{354.45} & 505.09 \\
ego-twitter & \textbf{0.6671} & 0.6655 & \textbf{79.31} & 94.21 \\
epinions & 0.8041 & \textbf{0.8071} & \textbf{281.98} & 969.03 \\
vaxNoVax & \textbf{0.8318} & 0.8299 & \textbf{100.95} & 193.51 \\
\bottomrule
\end{tabular}
}
\label{table:ablation}
\vspace{-3mm}
\end{table}

\spara{RQ3: Undirected graphs.}
We also compare \ourmethod
with the SDP approach (defined in Eq.~\eqref{eq:sdp}),
using both synthetic and real-world undirected networks.
We solve the SDP problem using the CVX library~\cite{grant2014cvx} with the SCS solver~\cite{o2016conic}.
Due to the computational limitations associated with the SDP approach  \cite{majumdar2019survey},
we limit the experiments to small networks.
As discussed in Section 2, the SDP formulation can be viewed as a version of the method proposed by Musco et al.~\cite{musco2018minimizing}, adapted to accommodate a given graph topology.
It is worth stressing that in this experiment, both the methods take in input an undirected graph, but while the SDP is forced to output a weighted undirected graph, our more general approach is allowed to assign different weights to $(i,j)$ and $(j,i)$.

As shown in Table~\ref{table:undirected},
\ourmethod outperforms the SDP approach,
especially in the case of Gaussian opinion distributions.
For uniform opinion distributions,
where there is a lack of structural polarization patterns,
the two methods are on par,
and in two configurations the SDP approach exhibits slightly better performance.

In summary, our method for directed graphs is also applicable to undirected graphs, where, by treating an undirected edge as the two corresponding directed links, it is possible to assign different importance weights to the content produced by user $i$ for the feed of user $j$ with respect to the content produced by $j$ for the feed of $i$. This flexibility, makes it to overperform the SDP method, which is optimal, but only for symmetric solutions.

\spara{Ablation study.}
We assess the impact of the ADAM optimizer
through an ablation study where we use only the standard gradient-descent procedure.
The results are presented in Table \ref{table:ablation}.
We observe that both algorithms achieve comparable values of the objective,
but \ourmethod, which uses the ADAM optimizer, converges faster than standard gradient descent. 

\section{Conclusions}
\label{sec:conclusions}
Leveraging the Friedkin--Johnsen opinion-dynamics model, we
presented a novel approach to minimize polarization and disagreement in a social network.
This is done by re-weighting the relative importance of the accounts that a user follows,
so as to calibrate the frequency with which the content produced by various accounts is shown to the user. By re-balancing the weight of existing links only, our approach preserves the total engagement of each user in the social network.
We showed that our objective function is
not matrix-convex, but the feasible set is convex, and we devised a scalable algorithm based on projected gradient descent. With an extensive experimental evaluation, we demonstrated
the effectiveness of the proposed approach in its ability to minimize polarization and disagreement in large social networks.

While the paper contributes valuable insights, it is important to acknowledge some limitations.
The absence of \emph{backfire-effect} modeling~\cite{wood2019elusive}
is inherent in the FJ model and restricts the approach's ability
to fully capture the complexities of polarization dynamics.
Further research could explore incorporating this effect.
Furthermore, it would be interesting to extend our approach to handle multiple topics and
leverage additional features.

\section*{ACKNOWLEDGMENTS}
This research is supported by ERC Advanced Grant REBOUND (834862), the EC H2020 RIA project SoBigData++ (871042), and the Wallenberg AI, Autonomous Systems and Software Program (WASP) funded by the Knut and Alice Wallenberg Foundation.

\bibliographystyle{ACM-Reference-Format}
\balance
\bibliography{reference}

\end{document}